\newif\ifcamera\camerafalse %
\newtheorem{invariant}{Invariant}
\titlespacing*{\section}{0pt}{*1}{*1}
\titlespacing*{\subsection}{0pt}{*1}{*1}
\titlespacing*{\subsubsection}{0pt}{*1}{*1}
\newcommand{\sysname}{CPLDS\xspace}
\newcommand{\kc}{$k$-core\xspace}
\newcommand{\desc}{\texttt{Descriptor}\xspace}
\newcommand{\descunmarked}{\texttt{UNMARKED}\xspace}
\newcommand{\descmarked}{\texttt{MARKED}\xspace}
\newcommand{\rootfield}{\texttt{parent}\xspace}
\newcommand{\descarray}{\texttt{desc\_array}\xspace}
\newcommand{\checkdag}{\texttt{check\_DAG}\xspace}
\newcommand{\oldlevel}{\texttt{old\_level}\xspace}
\newcommand{\iamroot}{\texttt{I\_AM\_ROOT}\xspace}
\newcommand{\bigO}[1]{\mathcal{O}(#1)}
\newcommand{\Vl}{V_\ell}
\newcommand{\batch}{\mathcal{B}}
\newcommand{\deltaBE}{\Delta_\batch(E)}
\newcommand{\lp}{LP\xspace}
\newcommand{\defn}[1]{\textbf{\textit{#1}}}
\newcommand{\core}{k}
\newcommand{\kest}{\hat{\core}}
\newcommand{\level}{\ell\xspace}
\newcommand{\add}{3}
\newcommand{\coeff}{\left(2 + \add/\lambda\right)}
\newcommand{\cplds}{\textsc{CPLDS}\xspace}
\newcommand{\syncreads}{\textsc{SyncReads}\xspace}
\newcommand{\nonlin}{\textsc{NonSync}\xspace}
\newcommand{\dblp}{\textit{dblp}\xspace}
\newcommand{\lj}{\textit{lj}\xspace}
\newcommand{\orkut}{\textit{orkut}\xspace}
\newcommand{\yt}{\textit{yt}\xspace}
\newcommand{\wiki}{\textit{wiki}\xspace}
\newcommand{\twitter}{\textit{twitter}\xspace}
\newcommand{\eps}{\varepsilon}
\newcommand{\stack}{\textit{so}\xspace}
\newcommand{\brain}{\textit{brain}\xspace}
\newcommand{\ctr}{\textit{ctr}\xspace}
\newcommand{\usa}{\textit{usa}\xspace}
\newcommand{\myparagraph}[1]{\smallskip\noindent {\bf #1.}}
\begin{document}

\title[Parallel \texorpdfstring{$k$}{k}-Core Decomposition with Batched Updates and Asynchronous Reads]{Parallel \texorpdfstring{$k$}{k}-Core Decomposition with Batched Updates and Asynchronous Reads}

\author{Quanquan C. Liu}
\affiliation{
  \institution{Yale University}            %
  \country{USA}                    %
}
\email{quanquan.liu@yale.edu}          %

\author{Julian Shun}
\affiliation{
  \institution{MIT CSAIL}           %
  \country{USA}                   %
}
\email{jshun@mit.edu}         %

\author{Igor Zablotchi}
\affiliation{
  \institution{Mysten Labs}           %
  \country{Switzerland}                   %
}
\email{igor@mystenlabs.com}         %

\begin{CCSXML}
<ccs2012>
   <concept>
       <concept_id>10003752.10003809.10011778</concept_id>
       <concept_desc>Theory of computation~Concurrent algorithms</concept_desc>
       <concept_significance>500</concept_significance>
       </concept>
   <concept>
       <concept_id>10010147.10010169.10010170</concept_id>
       <concept_desc>Computing methodologies~Parallel algorithms</concept_desc>
       <concept_significance>500</concept_significance>
       </concept>
   <concept>
       <concept_id>10003752.10003809.10003635.10010038</concept_id>
       <concept_desc>Theory of computation~Dynamic graph algorithms</concept_desc>
       <concept_significance>500</concept_significance>
       </concept>
 </ccs2012>
\end{CCSXML}

\ccsdesc[500]{Theory of computation~Concurrent algorithms}
\ccsdesc[500]{Computing methodologies~Parallel algorithms}
\ccsdesc[500]{Theory of computation~Dynamic graph algorithms}

\keywords{parallelism, concurrency, \texorpdfstring{$k$}{k}-core decomposition}  %

\begin{abstract}
Maintaining a dynamic \kc decomposition is an important problem that identifies 
dense subgraphs in dynamically changing graphs.
Recent work by Liu et al.\ [SPAA 2022] presents a parallel batch-dynamic algorithm for maintaining an approximate \kc decomposition. In their solution, both reads and updates need to be batched, and therefore each type of operation can incur high latency waiting for the other type to finish. 
To tackle most real-world workloads, which are dominated by reads, this paper presents 
a novel hybrid concurrent-parallel dynamic \kc data structure where asynchronous reads can proceed concurrently with batches of updates, leading to significantly lower read latencies. Our approach is based on tracking causal dependencies between updates, so that causally related groups of updates appear atomic to concurrent readers. Our data structure guarantees linearizability and liveness for both reads and updates, and maintains the same approximation guarantees as prior work. 
Our experimental evaluation on a 30-core machine shows that our approach reduces read latency by orders of magnitude compared to the batch-dynamic algorithm,
up to a $\left(4.05 \cdot 10^{5}\right)$-factor.
Compared to an unsynchronized (non-linearizable) baseline, our read latency overhead is only up to a $3.21$-factor greater, while improving accuracy of coreness estimates by up to a factor of $52.7$.
\end{abstract}
\maketitle

\section{Introduction}

The discovery of underlying structure in large-scale networks poses a fundamental challenge in various computing domains. One crucial aspect involves identifying communities within the network where individuals or vertices share strong connections, as well as understanding the level of connectivity of each individual to their respective community. The notion of a $k$-core, or more generally, $k$-core decomposition, effectively captures the well-connectedness of a vertex or group of vertices. Consequently, this problem and its variations have received extensive attention across machine learning~\cite{DBLP:conf/nips/Alvarez-HamelinDBV05,DBLP:conf/icml/EsfandiariLM18,DBLP:conf/icml/GhaffariLM19}, database~\cite{DBLP:conf/kdd/BonchiGKV14,DBLP:conf/icde/ChuZ00ZXZ20,DBLP:conf/edbt/Esfahani0T019,DBLP:journals/pvldb/LiZZQZL19,DBLP:conf/atal/MedyaMSS20}, social network analysis, graph analytics~\cite{DBLP:conf/spaa/DhulipalaBS17, DBLP:conf/spaa/DhulipalaBS18, DBLP:conf/ipps/KabirM17, DBLP:journals/pvldb/KhaouidBST15}, computational biology~\cite{ciaperoni2020relevance, kitsak2010identification, liu2015core, malliaros2016locating}, and other relevant communities~\cite{DBLP:journals/tkdd/GalimbertiBGL20, DBLP:journals/pvldb/KhaouidBST15, DBLP:conf/csonet/LuoY0DCYC19, DBLP:journals/pvldb/SariyuceGJWC13}. 

Given an undirected graph $G$ with $n$ vertices and $m$ edges, the $k$-core of the graph represents the largest subgraph $H \subseteq G$ in which every vertex in $H$ has a degree of at least $k$. The $k$-core decomposition of the graph refers to a partition of the graph into layers, where a vertex $v$ is placed in layer $k$ if it belongs to a $k$-core but not a $(k + 1)$-core. This layering process assigns a \emph{coreness} value to each vertex based on the largest $k$-core that it belongs to, leading to a natural hierarchical clustering. 

Traditional algorithms that give exact solutions to \kc decomposition inherently follow a sequential approach~\cite{DBLP:journals/jacm/MatulaB83}. In fact, \kc decomposition is known to be a P-complete problem~\cite{anderson1984p}, so efficient parallel algorithms that solve it exactly are unlikely to exist. To overcome this limit, we focus on achieving a close approximate decomposition, which provides utility in areas where existing methods focus mostly on approximations, such as epidemiology~\cite{ciaperoni2020relevance, kitsak2010identification, liu2015core, malliaros2016locating}, community detection and network centrality measures~\cite{DBLP:journals/tweb/DourisboureGP09, DBLP:journals/pvldb/FangCLLH17, DBLP:conf/waw/HealyJMA06, DBLP:conf/kdd/MitzenmacherPPT15, DBLP:conf/icde/WangCLZQ18, DBLP:journals/pvldb/ZhangZQZL17}, network visualization and modeling~\cite{DBLP:conf/nips/Alvarez-HamelinDBV05,carmi2007model,DBLP:journals/kais/YangL15,DBLP:journals/tjs/ZhangZCLZ10}, protein interactions~\cite{DBLP:journals/bmcbi/Altaf-Ul-AminSMKK06,DBLP:journals/bmcbi/BaderH03}, and clustering~\cite{DBLP:conf/aaai/GiatsidisMTV14,DBLP:series/ads/LeeRJA10}.

Current emphasis has also been on addressing the \textit{dynamic} nature of large networks. Networks undergo frequent updates which require real-time \kc computations
for various applications. Significant progress has been made on dynamic \kc algorithms in both sequential~\cite{DBLP:journals/tkde/LiYM14,DBLP:journals/pvldb/Lin000T21,DBLP:journals/vldb/SariyuceGJWC16,DBLP:journals/tkdd/SunCS20,DBLP:journals/tkde/WenQZLY19,DBLP:conf/icde/ZhangYZQ17} and parallel settings~\cite{DBLP:conf/debs/AridhiBMV16,DBLP:journals/tpds/HuaSYJYCCC20,DBLP:journals/tpds/JinWYHSX18} to achieve fast, practical solutions.

Recent work by Liu et al.\ has studied \kc decomposition in the parallel \emph{batch-dynamic} setting, where operations proceed in batches and there is global synchronization between different batches~\cite{plds}. Each batch consists of exactly one type of operation---reads, insertions, or deletions. 
However, a key challenge arises: querying the system state has high latency, as reads cannot safely proceed concurrently with update batches. Unsynchronized reads, concurrent with updates, may not only lead to hard-to-interpret non-linearizable results, but can also break the approximation bounds of the \kc algorithm (in fact, the error could be unbounded, as we show later). Thus, reads in current parallel batch-dynamic algorithms must either wait for updates to finish, or be performed synchronously as part of the batch, both adding latency. This is problematic for applications that require low read latency. Examples include social networks and search engines: these need to be very responsive on the dominant user-facing read path~\cite{TAO,TAOBench}, while prioritizing throughput on the update path.

In this paper, we address this gap by proposing a novel $k$-core algorithm in which reading a vertex's coreness can proceed asynchronously and concurrently with (batches of) updates and with other reads. We achieve this by tracking causal dependencies between updates and reads. We show that such dependencies can be tracked efficiently, without locking, and without sacrificing the performance of updates.

Our algorithm, similar to previous work, relies on the Level Data Structure (LDS) approach. The core idea behind the LDS approach is that the $k$-core decomposition of a graph can be represented as a sequence of levels. These levels are organized into groups, where vertices within each group share the same coreness (within the approximation factor). The LDS serves as a data structure that maintains the levels of all vertices, gets updated when the graph undergoes edge insertions or removals, and facilitates queries regarding vertex coreness.

The main challenge in designing our algorithm is achieving atomic reads that can proceed concurrently with batches of updates while incurring low overhead. In brief, this challenge arises because reads might need to be atomic with respect with, and thus synchronize with, a potentially large number of concurrent updates. This might seem at first counter-intuitive. 

At first glance, it may seem as though a read of vertex $v$ only needs to be synchronize with updates to edges incident to $v$. However, the situation is more intricate: an update, say an insertion of edge $e$, may not only cause changes in the levels of vertices incident to $e$, but can also trigger a chain effect of vertices moving levels inside the LDS. All of these level changes are causally dependent on the initial update and therefore must appear to reads to take place atomically. Furthermore, it is possible for vertex level changes to collectively result from multiple edge updates, necessitating that all of these updates appear atomic to reads.

We aim for lock-free reads. Lock-freedom has the benefit of guaranteeing that the system always makes progress, even if some processes are slow, but it comes with the challenge of precluding simple solutions based on locking. We also aim for our updates to complete in a finite number of steps. Due to technical reasons which we explain in Section~\ref{sec:prelims}, our updates cannot be said to be lock-free, and so we use the term \textit{live} instead.

To overcome these challenges, we propose a solution that involves tracking causal dependencies through Directed Acyclic Graphs (DAGs) of operation descriptors. In essence, this works as follows. During each update batch, each vertex $v$ that needs to change levels in the LDS is associated with an operation descriptor containing information about which vertices that moved earlier in the batch caused $v$ to also have to move. This creates a DAG of operation descriptors. Readers that encounter a vertex $v$ with an active descriptor need to first establish whether $v$, and the transitive closure of $v$'s causal dependencies (as tracked by the DAG), are still in the process of being updated. If they are, the read must return the old level of $v$, since the new, final level might not be known yet. Otherwise, if the update process is complete, the read operation can safely return the new level.

We call our data structure the \defn{concurrent parallel level data structure (\sysname)}.
We implement our data structure in C++ using the GBBS~\cite{DhulipalaBS19} and ParlayLib~\cite{BlAnDh20} 
libraries and conduct an experimental evaluation of our algorithm on a 30-core machine. Our evaluation shows that, compared to the batch-dynamic algorithm of Liu et al.~\cite{plds},  
adding asynchronous reads only increases the update time by a factor of at most $1.48$, while decreasing the 
read latency by a factor of up to $4.05 \cdot 10^5$. 
We also compare to an unsynchronized (non-linearizable) baseline, and show that our read latency is only up to $3.21$x slower, while returning coreness estimates that are up to $52.7$x more accurate.

\section{Preliminaries}\label{sec:prelims}

We study undirected and unweighted graphs in this paper, and use $n$ to denote the number of vertices and $m$ to denote the number of edges in a graph. We assume each vertex is represented by a unique integer in $[0,\ldots,n-1]$.
We study the $k$-core decomposition problem, which is defined below.

\begin{definition}[$k$-Core]\label{def:k-core}\label{def:k-shell}
  For a graph $G$ and positive integer $k$, the \defn{$k$-core} 
  of $G$ is the maximal subgraph of $G$ with minimum induced degree $k$.
\end{definition}

\begin{definition}[\unboldmath{$k$}-Core Decomposition]\label{def:k-core-decomp}
A \defn{\kc decomposition} is a partition of vertices into layers
such that a vertex $v$ is in layer $k$ if it
belongs to a $k$-core but not to a $(k + 1)$-core. $k(v)$
denotes the layer that vertex $v$ is in, and is called the
\defn{coreness} of $v$.
\end{definition}

Definition~\ref{def:k-core-decomp} defines an \emph{exact} $k$-core
decomposition.
A \emph{$c$-approximate} $k$-core decomposition is defined as follows.

\begin{definition}[\unboldmath{$c$}-Approximate \unboldmath{$k$}-Core
    Decomposition]\label{def:approx-k-core}
    A \defn{$c$-approximate} \defn{\kc decomposition} 
    is a partition of
    vertices into layers such that a vertex $v$ is 
    in layer $k'$ only if
    $\frac{k(v)}{c} \leq k' \leq ck(v)$, where $k(v)$ is the coreness of $v$.
\end{definition}

In the parallel \emph{batch-dynamic} setting, algorithms process operations in batches, with each batch consisting of exactly one type of operation---reads, edge insertions, or edge deletions.\footnote{We focus on edge updates for simplicity, but most batch-dynamic solutions can be modified to support vertex updates as well.}
In this paper, we study a hybrid setting, where reads are asynchronous and can execute at any time, while updates are batched and executed together periodically.  
This solves the latency issue for read operations, which are the dominant type of operation in most workloads, e.g., in social networks~\cite{TAO,TAOBench}.

In theory, it would be desirable to make updates asychronous as well, but it is much more challenging to do so while guaranteeing linearizability. We leave this to future work. Below, we introduce our model more formally.

We consider a set of $P$ processes that communicate through standard shared-memory primitives. The processes coordinate to maintain the graph $G$ and $G$’s associated \sysname data structure by serving incoming operations. 
Operations on the \sysname can be either reads or updates. A read operation takes an input node and returns its coreness estimate in the \sysname. An update operation can be either an edge insertion or an edge deletion. It adds or removes an input edge $e$ to/from $G$ and updates the (levels of vertices in the) \sysname accordingly.

The set of processes can be partitioned into a set of update processes, which only perform updates, and a set of read processes, which only perform reads. Updates are performed in batches by the update processes. We assume in this paper that each batch consists either of only insertions or only deletions (in practice, batches contain a mix of insertions and deletions, which are separated into insertion and deletion sub-batches during pre-processing). The updates in each batch are executed collectively and in parallel by the updating processes. The steps required to execute all updates in a batch are pooled together for efficient parallel execution. In other words, it is not the case that each update is executed by a single process; instead, all update processes collectively execute each batch. Reads are performed by the read processes asynchronously and concurrently to batches of updates. In contrast to updates, reads are not executed in batches, but individually. Each read is performed by a single process from beginning to end. 
Such process separation may be employed by applications with different flows for reads and updates, e.g., in which reads access data directly, while updates modify several internal data structures.

Our timing assumptions are as follows: (1) update processes are synchronous, meaning that their computation and communication delays are bounded by a known constant, and (2) read processes are asynchronous, meaning that they can be arbitrarily delayed, without any upper bound on the delay. 
We do not consider process failures in this work.

In terms of safety, our algorithms satisfy \textit{linearizability} (also called atomicity). Essentially, linearizability requires that each operation (read or update) appears to take effect instantaneously at a moment in time that falls between that operation’s invocation and response. 

In terms of liveness, our algorithms guarantee that reads are \textit{lock-free}: if reads are invoked infinitely often, then some operation in the system terminates in a finite number of steps, infinitely often~\cite{herlihy2012art}. Furthermore, our algorithms guarantee that each update terminates in a finite number of steps. However, since our updates are executed on synchronous processes that do not fail, they cannot be said to be lock-free, so we instead say that updates are \textit{live}.

\section{Background}
This section presents background information on the sequential and parallel level data structures that our approach is based on.

\subsection{Level Data Structure (LDS)}

The sequential level data structure of Bhattacharya et al.~\cite{BHNT15} and Henzinger et al.~\cite{HNW20} 
combined with the proof given by Liu et al.~\cite{plds}
maintains a $(2+\eps)$-approximate coreness value for each vertex in the graph for any constant $\eps > 0$. 

The LDS partitions the vertices of $G$ into $K = \bigO{\log^2 n}$ \textit{levels}, $0,\ldots,K-1$. The levels are partitioned into equal-sized \textit{groups} of contiguous levels. There are $\bigO{\log n}$ groups and each group $g_i$ has $\bigO{\log n}$ levels. We denote the level of a vertex $v$ by $\level(v)$.

Whenever an edge is inserted into or removed from the graph, one or more vertices may change their level, and thus the LDS must also be updated. This proceeds as follows. After each edge update, vertices update their levels based on whether or not they satisfy two invariants (these invariants are explained below). If a vertex $v$ violates one of the invariants, it must move up or down one level in the LDS, and then re-check the invariants; we repeat this process for every vertex $v$ until all vertices satisfy both invariants. 

It is important to note that each time a vertex changes levels, this may cause other vertices to violate one of the invariants and thus have to move as well. Thus, every vertex level change may potentially trigger a cascading effect of other vertices changing levels.

\myparagraph{LDS Invariants} 
The first invariant upper bounds the induced degree of a vertex $v$ in the subgraph of all vertices at $v$'s level or above. If a vertex $v$ violates the first invariant, $v$ must move up (at least) one level. The second invariant lower bounds the induced degree of a vertex $v$ in the subgraph consisting of the level below $v$, the level of $v$, and all levels above $v$. If a vertex $v$ violates the second invariant, it must move down (at least) one level. It is important to note that inserting more edges into the graph may only cause vertices to violate the first invariant, but not the second; similarly, deleting edges from the graph may only cause vertices to violate the second invariant, but not the first.

We now give the invariants in more technical detail. For each level $\ell = 0,\ldots,K-1$, let $\Vl$ be the set of vertices currently in level $\ell$. Let $Z_l$ be the set of vertices in levels greater or equal to $\ell$. Let $\delta > 0$ and $\lambda > 0$ be two constants. Let $g_0, ..., g_{\lceil \log_{(1+\delta)} n\rceil}$ be the groups into which the $K$ levels are partitioned. 

\begin{invariant}[Degree Upper Bound]\label{inv:1}
    If vertex $v \in \Vl$, level $\ell < K$, and $\ell \in g_i$, then $v$ has at most $(2 + 3/\lambda)(1+\delta)^i$ neighbors in $Z_\ell$.
\end{invariant}

\begin{invariant}[Degree Lower Bound]\label{inv:2}
    If vertex $v \in \Vl$, level $\ell > 0$, and $\ell - 1 \in g_i$, then $v$ has at least $(1 + \delta)^i$ neighbors in $Z_{\ell-1}$.
\end{invariant}

\subsection{Parallel LDS (PLDS)}

The Parallel LDS (PLDS) algorithm of Liu et al.~\cite{plds} is a parallel batch-dynamic LDS algorithm. It improves upon the original LDS algorithm by observing that (1) in many cases, vertices can be updated in parallel (instead of sequentially) and (2) if the vertices are updated in a carefully chosen order, the number of times a given vertex needs to be processed can be significantly reduced.

In the PLDS algorithm, updates arrive in batches. During the execution of a batch, updates are partitioned into insertions and deletions; thus each batch has an insertion phase and a deletion phase. 

During the insertion phase, levels are visited in increasing order (starting with level $0$). The vertices in each level are checked in parallel against Invariant~\ref{inv:1} and moved up one level if necessary. The algorithm ensures that each level needs to be visited at most once during the insertion phase: after vertices move up from level $\ell$, no future step in the current batch moves a vertex up from level $\ell$. Note that a vertex can move up many levels, one level at a time.

During the deletion phase, 
each vertex that violates Invariant~\ref{inv:2} computes its \defn{desire level}, which is the highest level below its current level where it satisfies
Invariant~\ref{inv:2}.
Levels are visited in increasing order, and when processing level $\ell$, all vertices with a desire level of $\ell$ move there. Their neighbors at higher levels will then recompute their desire levels. The algorithm ensures that a vertex will never need to move again once it is moved to its desire level, and that no vertices will want to move to a level $\leq \ell$ after processing level $\ell$.

\myparagraph{Coreness Approximation}
The $(2+\epsilon)$-approximate coreness $\kest(v)$ of a vertex $v$ is
computed as in \cref{def:core-estimate-number}.

\begin{definition}[Coreness Estimate]\label{def:core-estimate-number}
    The \emph{coreness estimate} $\kest(v)$ of vertex $v$ is 
    $(1+\delta)^{\max{(\lfloor(\level(v) + 1)/4\lceil \log_{1+\delta} n\rceil \rfloor-1, 0)}}$, where
    each group has $4\lceil\log_{(1+\delta)} n\rceil$ levels.
\end{definition}

The following lemma by Liu et al.~\cite{plds} proves the $(2+\epsilon)$-approximation for coreness values.

\begin{lemma}\label{lem:core-num}
    Let $\kest(v)$ be the coreness estimate
    and $\core(v)$ be the coreness of $v$,
    respectively. If $\core(v) >
    \coeff(1+\delta)^{g'}$, then
    $\kest(v) \geq (1+\delta)^{g'}$. Otherwise, if $\core(v) <
    \frac{(1+\delta)^{g'}}{\coeff(1+\delta)}$,
    then $\kest(v) < (1+\delta)^{g'}$.
\end{lemma}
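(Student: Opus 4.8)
The plan is to relate $\level(v)$ to $\core(v)$ in both directions and then read the claimed bounds on $\kest(v)$ directly off \cref{def:core-estimate-number}. Write $\beta := 4\lceil \log_{(1+\delta)} n\rceil$ for the number of levels in each group, so that the group containing a level $\ell$ is $\lfloor \ell/\beta\rfloor$ and $\kest(v) = (1+\delta)^{\max(\lfloor(\level(v)+1)/\beta\rfloor-1,\,0)}$. It then suffices to prove the two level estimates: \textbf{(A)} if $\core(v) > \coeff(1+\delta)^{g'}$ then $\level(v) \ge (g'+1)\beta$; and \textbf{(B)} if $\core(v) < \frac{(1+\delta)^{g'}}{\coeff(1+\delta)}$ then $\level(v) \le (g'+1)\beta - 2$. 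Substituting these level bounds into the formula for $\kest(v)$ immediately gives $\kest(v)\ge(1+\delta)^{g'}$ and $\kest(v)<(1+\delta)^{g'}$, respectively; the corner case $g' = 0$, where the second hypothesis forces $\core(v) = 0$, is trivial or is excluded by assuming $\core(v) \ge 1$.

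For \textbf{(A)} I would invoke \cref{inv:1}. Let $H$ be the $\core(v)$-core of $G$ containing $v$, and let $u$ be a vertex of $H$ of minimum level, with $\level(u) = \ell_0$ and $\ell_0 \in g_i$. Since $\ell_0$ is minimal among the levels appearing in $H$, all of $u$'s at least $\core(v)$ neighbors inside $H$ lie at level $\ge \ell_0$, i.e.\ in $Z_{\ell_0}$; because $\ell_0 < K$, \cref{inv:1} bounds this neighborhood by $\coeff(1+\delta)^i$, so $\core(v) \le \coeff(1+\delta)^i$. Together with the hypothesis $\core(v) > \coeff(1+\delta)^{g'}$ and the integrality of $i$, this forces $i \ge g' + 1$, and hence $\level(v) \ge \ell_0 \ge i\beta \ge (g'+1)\beta$. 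This direction is short and self-contained.

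For \textbf{(B)} I would argue the contrapositive: if $\level(v) \ge (g'+1)\beta - 1$, then $\core(v) \ge \frac{(1+\delta)^{g'-1}}{\coeff} = \frac{(1+\delta)^{g'}}{\coeff(1+\delta)}$. The goal is to exhibit a subgraph of $G$ that contains $v$ and has minimum degree at least $\frac{(1+\delta)^{g'-1}}{\coeff}$. The natural starting point is the subgraph induced by $Z_{g'\beta}$: by \cref{inv:2}, every vertex at a level strictly above $g'\beta$ has at least $(1+\delta)^{g'}$ neighbors inside $Z_{g'\beta}$, and $v$ lies at least $\beta - 1$ levels above $g'\beta$, so $v$ has this many neighbors in $Z_{g'\beta}$ as well. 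The only vertices of $Z_{g'\beta}$ not guaranteed large degree are those in the bottom level $V_{g'\beta}$, so one must peel these away — along with any vertices that consequently drop below the threshold — and show $v$ is never removed. Here \cref{inv:1} does the accounting: a vertex at level $g'\beta$ has at most $\coeff(1+\delta)^{g'}$ neighbors at levels $\ge g'\beta$, which bounds the total number of edges the peeling can destroy and is exactly the source of the $\coeff$-factor slack in the statement; meanwhile the $\beta$-level gap between $\level(v)$ and $g'\beta$ — available precisely because $\kest(v) \ge (1+\delta)^{g'}$ — ensures the cascade of removals, which advances by at most one level at a time, halts while still strictly below $v$.

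The main obstacle is exactly this peeling analysis in \textbf{(B)}: making rigorous that deleting the low-degree boundary vertices cannot cascade all the way up to $v$, and that what survives still has minimum degree at least $\frac{(1+\delta)^{g'-1}}{\coeff}$. The points to pin down are that a vertex at level $> g'\beta$ loses its \cref{inv:2}-guaranteed neighbors only after the level immediately below it is cleared (so the cascade climbs at most one level per round), that \cref{inv:1} caps the total damage, and that $v$'s $\beta$-level head start suffices to outlast the cascade. Everything else — the translations between $\kest(v)$ and $\level(v)$ via \cref{def:core-estimate-number}, the integrality step in \textbf{(A)}, and the $g' = 0$ corner case — is routine.
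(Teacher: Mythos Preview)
The paper does not prove \cref{lem:core-num}: it is quoted verbatim as a result of Liu et al.\ (the PLDS paper), with the sentence ``The following lemma by Liu et al.~\cite{plds} proves the $(2+\epsilon)$-approximation\ldots'' immediately preceding it. There is therefore no in-paper proof to compare your proposal against.

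On your proposal itself. Direction \textbf{(A)} is correct and is the standard argument. For direction \textbf{(B)}, your one-level-per-round observation is valid --- \cref{inv:2} at level $\ell$ guarantees neighbors in $Z_{\ell-1}$, so a vertex's guaranteed neighbors survive until the peeling first touches level $\ell-1$ --- but you have not shown the cascade \emph{halts} before reaching $v$. Since $\ell(v)$ can be as low as $(g'+1)\beta-1$, you have only $\beta-1$ levels of buffer, and the cascade is allowed $\beta-1$ rounds; nothing in the sketch prevents it from reaching $v$. Your use of \cref{inv:1} to ``cap the total damage'' is also not right: \cref{inv:1} bounds the upward degree of a \emph{single} vertex at level $g'\beta$, so it bounds the edges lost when \emph{one} such vertex is removed, but you may remove all of $V_{g'\beta}$, and then all of $V_{g'\beta+1}$, and so on --- the aggregate is not controlled this way, and the $(2+3/\lambda)$ factor does not fall out of this accounting. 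Making \textbf{(B)} go through requires an extra ingredient that exploits the specific magnitude $\beta = 4\lceil\log_{1+\delta} n\rceil$ (which your sketch never uses); that is where the proof in the cited work does its real work.
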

\section{Algorithm Overview}

To ensure linearizability, a basic challenge that our algorithm needs to solve is to avoid returning intermediate values: a read of some vertex $v$'s level, that is concurrent with an update to the level of $v$, should either return $v$'s pre-update level (its old level), or $v$'s post-update level (its new level), but not any intermediate level between the old and new levels. 

A first and naive version of our algorithm that addresses this challenge is as follows: 
we use \textit{operation descriptors} to synchronize between updates and reads.\footnote{Note that updates do not synchronize with each other through the operation descriptors; instead, they are synchronized as part of the batch-dynamic parallel execution.}
If a vertex $v$ has an active operation descriptor, this signals to concurrent reads that $v$ is in the process of changing levels in the \sysname. Essentially, if a read of $v$ finds that $v$ is \textit{marked} with an active descriptor, the read must return the old level of $v$, before $v$ started changing levels in the current batch. This is because the final level of $v$ might not yet be known, and returning an intermediate level for $v$ (in between its old and new levels) would violate linearizability. Thus, $v$'s operation descriptor records the old level of $v$.

However, this first algorithm does not solve another challenge required by linearizability: avoiding new-old inversions among causally dependent vertices. Consider two vertices $u$ and $v$, such that $u$'s level change (which is triggered by an update) causes $v$ to now violate one of the LDS invariants and to also have to change levels. In any sequential execution, the update that moves $u$ also moves $v$, so no read can observe the old level of $v$ after some read has already observed the new level of $u$, or vice-versa. However,  our first algorithm allows such new-old inversions in concurrent executions: if $u$ is marked but $v$ is not yet (or no longer) marked, then a pair of reads might return the new level of $v$ (since $v$ is not marked) and then the old level of $u$ (since $u$ is marked). 

Therefore, it is not sufficient for a read of $v$ to synchronize with level changes of $v$ alone. Such a read must also synchronize with level changes of $v$'s causally dependent vertices. In fact, it must synchronize with the entire transitive closure of vertices that may have caused $v$ to move or which $v$ may have caused to move. As in the LDS and PLDS algorithms, in our algorithm it is possible for updates to create dependency chains among vertices: an update causes a node $v$ to change levels, which causes one or more of $v$'s neighbors to violate the invariants and have to change levels, which may cause their neighbors in turn to change levels, and so on. We represent these causal dependencies as a Directed Acyclic Graph (DAG): in such a DAG, there is an edge $v\rightarrow u$ if $u$'s level change caused $v$ to also have to change level. If $v$ has no such outgoing edge, we call $v$ a \textit{root} (this occurs if $v$ moves only as a direct result of an edge update, as opposed to moving as a result of one of its neighbors in $G$ moving). 

The set of vertices that move during a batch can thus be partitioned into dependency DAGs. To avoid new-old inversions, our algorithm must ensure that the level changes of all vertices within a DAG appear to concurrent readers to take effect atomically; we call this the \textit{DAG atomicity rule}. An example is shown in \cref{fig:non-linearizable}.

\begin{figure}[!t]
    \centering
    \includegraphics[width=\columnwidth]{./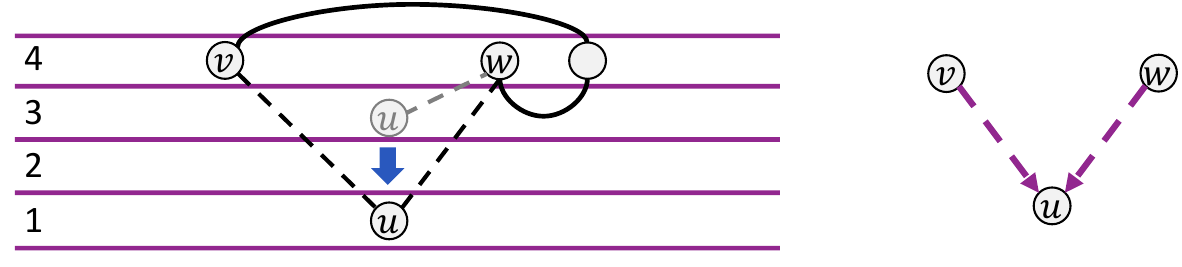}
    \caption{A PLDS and a dependency DAG in which $v$'s and $w$'s level changes are indirectly caused by the level change of $u$. In any sequential execution, the operation that causes the level of $u$ to change also changes the levels of $v$ and $w$. 
    Thus, it is impossible in any sequential execution for a read to return the old level of $u$, $v$, or $w$ after another read has already returned the new level of one of these vertices.
    To ensure linearizability, our algorithm must therefore guarantee that level changes to vertices in the same DAG appear to take effect atomically to concurrent readers.}   
    \label{fig:non-linearizable}
\end{figure}

We enforce the DAG atomicity rule by maintaining the invariant that each DAG has a single root, and rely on an atomic operation on this single root to linearize the level changes of all vertices in the DAG. To ensure that each DAG has a single root, we do the following: whenever a DAG has more than one root, we deterministically pick one of them as the sole root, and make the others point to the sole root. 

Even though the dependency graph is a DAG, in our algorithm we do not need to materialize the entire DAG (i.e., store all of the dependencies). In fact, we only require that we can reach the root of a DAG from any vertex in the DAG. Thus, it is sufficient to store a single \textit{parent} for each vertex in the DAG. Whenever we create an operation descriptor for some vertex $v$ (we say that $v$ becomes \textit{marked}), we include in the descriptor a pointer to $v$'s parent in the DAG. By traversing these parent pointers we will reach the root from any vertex in a finite number of steps. Therefore, we only materialize a subtree of each DAG. However, we continue using the DAG terminology in this paper.

We now describe the high-level changes our \sysname data structure introduces with respect to PLDS:
\begin{enumerate}[topsep=1pt,itemsep=0pt,parsep=0pt,leftmargin=10pt]
    \item When a vertex $v$ becomes marked during a batch of updates, we create an operation descriptor for $v$ and populate it with $v$'s old (pre-update) level and parent.
    \item At the end of each batch, we unmark all marked nodes by deleting all operation descriptors. We first unmark the root of each DAG, and then unmark all non-root vertices.
    \item A read of vertex $v$ examines $v$'s operation descriptor (if any): if $v$ is marked and its root is also marked, the read returns the coreness estimate using $v$'s old level (as recorded in $v$'s descriptor); otherwise, the read returns the coreness estimate using $v$'s current level, which we call its \defn{live level}. 
\end{enumerate}

In the next section, we describe our algorithm in more technical detail.

\section{Detailed Algorithm}
\subsection{Data Structures and Global State}

\begin{lstlisting}[float=t!,caption={Data structures and global variables},label={alg:structs2}]
struct Descriptor:
    // a pointer to this node's parent in the dependency DAG
    int parent
    // this node's level before the current batch of updates
    int old_level 

// global variables
Descriptor desc_array[num_vertices]
int batch_number = 0 // incremented at the start of every batch
\end{lstlisting}

Algorithm~\ref{alg:structs2} shows the \desc data structure; it may be in one of two states at any given time. If the \desc has the special value \descunmarked, then we say that $v$ and its descriptor are \textit{unmarked}, which means that $v$ is not currently in the process of changing levels in the \sysname. Otherwise, we say that $v$ and its descriptor are \textit{marked}, and thus $v$ is in the process of changing its level. A marked descriptor has two fields: \rootfield and \oldlevel. The \rootfield field contains the index of $v$'s parent node, or the special value \iamroot if $v$ has no parent because $v$ is the root of its DAG.

We maintain a global array \descarray of \desc{s}, one per vertex in the graph, for the lifetime of the program. As part of our global state, we also maintain a variable \texttt{batch\_number}, which is incremented at the start of each batch.

\subsection{Updates}
Our update algorithm executes each batch $\batch$ as follows; we show an example in~\cref{fig:example-insertion}. First, we insert into, or delete from, $G$ all of the edges in $\batch$. Then, we traverse the \sysname level by level and update the levels of the vertices impacted by the edge updates of $\batch$. Whenever we detect that a vertex violates one of the invariants, we mark it as described below, and move it up or down one or more levels in the \sysname. This is done in parallel for all vertices on a given level in the \sysname. After we have done this for every level in the \sysname, we finalize the batch by unmarking all marked vertices (described below).

\begin{figure*}[!t]
    \centering
    \includegraphics[width=0.7\textwidth]{./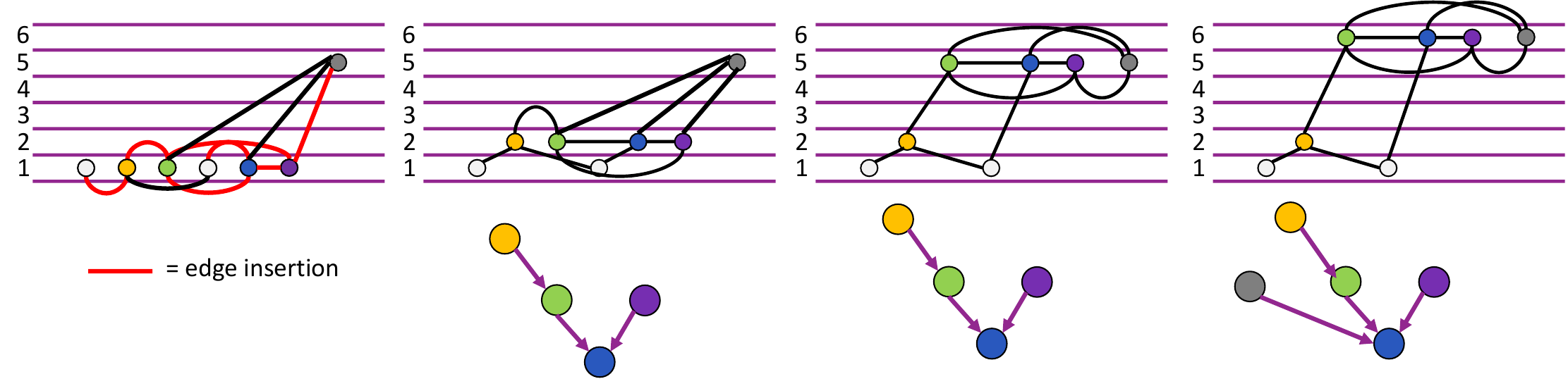}
    \caption{The insertion batch is shown in red. The batch causes the yellow, green, blue, and purple vertices to move up one level with the created dependency DAG 
    shown below. Then, the green, blue and purple vertices continue moving up the levels. Finally, the green, blue, and purple vertices cause the 
    gray vertex to move up a level. Since the green, blue, and purple vertices are all in the same dependency DAG, the gray vertex points to the root (the blue vertex).}\label{fig:example-insertion}
\end{figure*}

\begin{lstlisting}[float=t!,caption={Update algorithm: marking and unmarking},label={alg:update}]
mark(int v, int triggers[]):
    desc = new Descriptor@\label{ln:newDesc}@
    desc.old_level = LDS.get_level(v)@\label{ln:oldLevel}@
    marked_batch_neighbors = [w for (v,w) in the batch @$\batch$@ and w is marked]@\label{ln:marked-batch-neighbors}@
    for w in (marked_batch_neighbors + triggers):@\label{ln:merge-start}@
        union(v,w)@\label{ln:merge-end}@
    desc_array[v] = desc@\label{ln:write-v-desc}@

// this is called at end of batch
unmark_all():
    // unmark all roots
    parfor all nodes v such that desc_array[v] != @\descunmarked@ and desc_array[v].root == @\iamroot@:@\label{ln:unmark-roots-start}@
        desc_array[v] = @\descunmarked@@\label{ln:unmark-roots-end}@
    // unmark all other marked nodes
    parfor all nodes v such that desc_array[v] != @\descunmarked@:@\label{ln:unmark-other-start}@
        desc_array[v] = @\descunmarked@@\label{ln:unmark-other-end}@
\end{lstlisting}

\myparagraph{Marking} Whenever a node $v$ becomes marked, we call the \texttt{mark} function (shown in Algorithm~\ref{alg:update}) and pass in $v$'s index in \descarray, as well as an array containing the indices of $v$'s \textit{triggers}. A vertex $u$ is a trigger for $v$ if $u$ may have contributed to $v$ becoming marked during the current batch. In the case of insertions, the set of triggers contains all marked neighbors of $v$ at the same level or higher level 
as $v$ in the \sysname. (A vertex which was at a lower level than $v$ earlier
in the batch but moved higher than $v$ could become a trigger later.) 
In the case of deletions, the set of triggers contains all marked neighbors of $v$ at any level lower than $\level(v) - 1$'s level. 

In the \texttt{mark} function, we first create a new descriptor for $v$ and populate its \oldlevel field with $v$'s current level, before $v$ moves (Lines~\ref{ln:newDesc}--\ref{ln:oldLevel}). We then determine the set of DAGs into which $v$ will be merged. These are: (1) the set of DAGs of $v$'s triggers and (2) the set of DAGs of $v$'s \textit{marked batch neighbors} (Line~\ref{ln:marked-batch-neighbors}). A vertex $w$ is a marked batch neighbor of $v$ if the edge $(v,w)$ is updated during $\batch$ and $w$ is already marked when we mark $v$. We merge $v$ into its marked batch neighbors' DAGs to ensure that no updated edge has its endpoints in different DAGs---this is necessary for correctness (see Section~\ref{sec:correctness}). %

Next, we merge the DAGs determined in the previous steps and add $v$ to the merged DAG (Lines~\ref{ln:merge-start}--\ref{ln:merge-end}). Care must be taken here regarding synchronization, as multiple threads that are marking vertices in parallel might merge overlapping sets of DAGs at the same time. In fact, this step is very similar to the \textit{union} operation in concurrent union-find implementations~\cite{Hong2020,Dhulipala2020,Jayanti2021,AlistarhFK19}. For conciseness, we reuse the union implementation described in~\cite{Jayanti2021} and implemented in~\cite{Dhulipala2020}, and denote it as \texttt{union} (Line~\ref{ln:merge-end}). %

\myparagraph{Unmarking} Unmarking, shown in Algorithm~\ref{alg:update}, is done by overwriting the contents of a vertex $v$'s descriptor with the special \descunmarked value. We first unmark all DAG roots (Lines~\ref{ln:unmark-roots-start}--\ref{ln:unmark-roots-end}), and then unmark all other nodes (Lines~\ref{ln:unmark-other-start}--\ref{ln:unmark-other-end}).

By unmarking root descriptors first, we maintain the following invariant: for each DAG, the root descriptor is marked before non-root descriptors in the same DAG are marked, and is unmarked before non-root descriptors in the same DAG are unmarked.

\myparagraph{Optimization: Path Compression} In our algorithm, we do not need to materialize DAGs fully; instead, each vertex $v$ points directly to the root of its DAG \textit{as it was at the moment when $v$ was added to the DAG}. However, due to our DAG merging mechanism in Algorithm~\ref{alg:update}, it is possible for the path from $v$ to the true root of $v$'s DAG to become more than one hop long. This is both unnecessary and inefficient, as traversing several hops to reach the root may impact performance. Therefore, as an optimization, when doing reads or updates, we perform \textit{path compression} when traversing the path from a vertex to its root: if this path is longer than one hop, at the end of the traversal, we overwrite $v$'s \rootfield field, as well as the \rootfield field of all of $v$'s ancestors that we traversed,
to point to the root. This optimization is a standard optimization in union-find algorithms and is done in the union-find implementation that we use~\cite{Dhulipala2020}.

\subsection{Reads} We start with Algorithm~\ref{alg:check-dag}, which contains the helper function \checkdag. This function takes a vertex $v$'s descriptor $D$ and determines whether $D$ is part of a marked DAG. The basic logic of \checkdag is as follows: we traverse $D$'s DAG until we reach the root: if the root is marked, return \descmarked; otherwise return \descunmarked. We also perform path compression for reads, and thus this is the same logic as the \emph{find} operation in union-find algorithms (not shown in the pseudocode).
However, instead of traversing to the root every time, we implement the following optimization which enables us to return early from \checkdag in some cases. If we encounter any unmarked descriptor along the way, including $D$ itself, we can return \descunmarked immediately, without continuing to the root. This is due to the invariant described above: if any non-root descriptor in a DAG is unmarked, it must be the case that the DAG's root has also been unmarked. Path compression is done on the path up to the unmarked node that we find.

We now describe the main read algorithm, whose pseudocode is in Algorithm~\ref{alg:read}. Essentially, the logic of a read of vertex $v$ is as follows: (1) read $v$'s live level and descriptor (Lines~\ref{ln:l1}--\ref{ln:read-desc}); (2) determine if $v$'s root is marked (Line~\ref{ln:check-dag}); (3) if it is, then return $v$'s old level from its descriptor (Line~\ref{ln:return-old}); otherwise, return $v$'s live level from step (1) (Line~\ref{ln:return-live}). However, we require additional logic to ensure linearizability.

First, we ``sandwich'' steps (1) and (2) above between two reads of the batch number (Lines~\ref{ln:b1} and \ref{ln:b2}). We repeat steps (1) and (2) until the two batch numbers match, meaning that the steps occurred within the same batch. Otherwise, the read logic might observe a mix of states from different batches and thus return non-linearizable results.

Furthermore, we sandwich step (2) in between two reads of the $v$'s live level (Lines~\ref{ln:l1} and \ref{ln:l2}); in case $v$ is unmarked (and thus the read returns the live level), these two reads must match. If we only performed one such read of the live level, this would enable a scenario in which the read returns an intermediate level of $v$, in between $v$'s old and new levels, which would not be linearizable.

\begin{lstlisting}[float=t!,caption={\texttt{check\_DAG} helper function},label={alg:check-dag}]
// returns whether the DAG that includes desc is marked or unmarked
check_DAG(Descriptor desc):
    // if v's descriptor is marked we can return directly
    if (desc == UNMARKED):
        return UNMARKED

    // otherwise, traverse to the root of v's DAG
    while (desc.parent != I_AM_ROOT):
        desc = desc.parent
        // if we encounter an unmarked descriptor on the path to the root, we can return directly
        if (desc == UNMARKED):
            return UNMARKED

    // return whether the root is MARKED or UNMARKED
    if (desc == UNMARKED):
        return UNMARKED
    return MARKED
\end{lstlisting}

\begin{lstlisting}[float=t!,caption={Read algorithm},label={alg:read}]
// returns the level of the vertex with index v
read(int v):
retry:
    b1 = batch_number@\label{ln:b1}@
    l1 = LDS.get_level(v)@\label{ln:l1}@
    desc = desc_array[v]@\label{ln:read-desc}@
    status = check_DAG(desc)@\label{ln:check-dag}@
    l2 = LDS.get_level(v)@\label{ln:l2}@
    b2 = batch_number@\label{ln:b2}@
    if (b1 != b2):
        goto retry
    else if status == MARKED:
        return coreness estimate using desc.old_level@\label{ln:return-old}@
    else: // status was UNMARKED
        if (l1 == l2):
            return coreness estimate using l1@\label{ln:return-live}@
        else:
            goto retry
\end{lstlisting}

\section{Correctness}\label{sec:correctness}
\ifcamera
We prove the linearizability and liveness of our algorithm in the full version of our paper~\cite{fullversion}.
\else
We prove the linearizability and liveness of our algorithm.
\fi 
In short, we prove

\begin{theorem}\label{thm:lin}
    Our algorithm is linearizable, and live: updates terminate in a finite number of steps and reads are lock-free.
\end{theorem}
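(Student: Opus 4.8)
\myparagraph{Overall approach} I would prove the three parts of \cref{thm:lin} --- linearizability, finite-step termination of updates, and lock-freedom of reads --- in turn, after first isolating the structural invariants the rest of the argument rests on. The invariants I would establish are: (i) the \rootfield pointers always form an acyclic forest, so every traversal --- in \checkdag, in a read, and in path compression --- terminates; this reduces to the correctness of the concurrent union-find primitive we reuse~\cite{Jayanti2021,Dhulipala2020}. (ii) A descriptor is cleared only inside \texttt{unmark\_all}, which runs strictly after all level changes of the batch, and \texttt{unmark\_all}'s two sequential \texttt{parfor}s clear every root before any non-root; consequently, once \checkdag observes an unmarked descriptor on a path the current root has already been cleared (justifying the early return), and once any part of a DAG is cleared all of its vertices already sit at their final-for-the-batch levels. (iii) \oldlevel, written when $v$ is first marked, equals $v$'s level at the start of the current batch, because a homogeneous batch moves each vertex monotonically and unmarked vertices never move. (iv) The \emph{confinement} invariant: the transitive set of vertices that move because of any single batch edge $e$ lies in one DAG, because \texttt{mark} unions each newly marked vertex with all of its triggers \emph{and} with all of its already-marked batch neighbors. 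Invariant (iv) lets me partition the updates of a batch into groups, one per DAG, plus a leftover group of updates that cause no level change (and are thus never observable by a read).

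\myparagraph{Linearizability} With these in hand I would fix linearization points: an update whose cascade is DAG $A$ is linearized at the instant $A$'s (final) root is cleared inside \texttt{unmark\_all} --- updates in the same group ordered arbitrarily, and an update causing no change placed anywhere in its batch's unmarking phase; a successful read of $v$ is linearized at the moment its \checkdag call determines a \descmarked verdict if it returns \oldlevel, and at its second live-level read if it returns the live level. Because batches are globally synchronized and the batch number is read once before and once after the critical section with the two values forced equal, each chosen point lies inside the lifetime of a single batch $b$: hence after every update of batches $<b$, and after exactly those updates of batch $b$ whose DAG has already been cleared at that instant (cleared-ness being monotone in time, these prefixes are mutually consistent). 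Real-time consistency is then immediate, since every linearization point lies inside its operation's interval (for an update, the root-clearing instant lies between the batch's start and the completion of \texttt{unmark\_all}). For return-value correctness I would use the case split enabled by (ii)--(iv): if \checkdag returns \descmarked then by (ii) we are not past \texttt{unmark\_all} for $v$'s DAG, so in the linearized state at that instant $v$'s DAG is unapplied and $v$ sits at its pre-batch level $=$ \oldlevel; if \checkdag returns \descunmarked then either $v$ never moved in batch $b$, or by (ii) processing is complete, $v$'s DAG is already applied, and $v$ sits at its final level --- and the equality test on the two live-level reads, together with monotonicity of moves within a homogeneous batch, guarantees the returned value is exactly that stable level.

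\myparagraph{Liveness} For ``updates terminate in a finite number of steps'' I would invoke the PLDS analysis~\cite{plds} for the processing phase ($\bigO{\log^2 n}$ levels, a bounded number of moves per vertex) and add that the extra work is also finite: each \texttt{union} call, the two bounded \texttt{parfor} sweeps of \texttt{unmark\_all}, and path compression along finite acyclic paths. Since update processes are synchronous and failure-free, this yields liveness but not lock-freedom. For ``reads are lock-free'' I would note that a read returns unless it executes \texttt{goto retry}, and that each retry witnesses concurrent update progress: a batch-number mismatch means a batch boundary was crossed, and a live-level mismatch (in the \descunmarked branch) means $v$ changed level, which happens only during a batch's processing phase. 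Hence if only finitely many operations ever terminated, then beyond some time no batch would be running or starting, and any (infinitely many) read invoked afterwards would see a fixed batch number, a fixed level for $v$, and an all-\descunmarked state, so it would terminate --- a contradiction.

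\myparagraph{Anticipated obstacle} The delicate parts are the confinement invariant (iv) and the \descunmarked-case return-value argument. Invariant (iv) needs an induction that follows how concurrently executing \texttt{mark} calls merge overlapping DAGs, with the atomicity of the merges themselves delegated to the cited union-find; and the return-value argument has to handle reads that run \emph{during} \texttt{unmark\_all}, when distinct DAGs of the same batch are simultaneously applied and unapplied --- which is precisely why the updates of a batch are linearized spread over its unmarking phase rather than at one common instant.
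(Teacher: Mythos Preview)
Your proposal is correct and follows essentially the same route as the paper: updates are linearized at the instant their DAG's root is unmarked (the paper spells this out with the ``epsilon trick,'' which is exactly your ``ordered arbitrarily''), reads are linearized at the descriptor/\checkdag step in the \descmarked branch and at the second live-level read in the \descunmarked branch, and the liveness argument is identical. The only noteworthy differences are cosmetic: your confinement invariant (iv) is a slightly sharper formulation than the paper's Lemma~\ref{lem:same-DAG} (which only asserts that the two endpoints of an updated edge share a DAG), you place the \descmarked-case read \lp at the \checkdag verdict rather than at Line~\ref{ln:read-desc}, and you park invisible updates in the unmarking phase rather than at the edge-modification step --- all valid alternative choices that do not change the structure of the argument.
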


\ifcamera
\else
\subsection{Safety (Linearizability)}

We begin by defining linearization points (LPs) for reads and updates, and then show that these linearization points are consistent with linearizability.

\subsubsection*{Linearization Points of Updates}

We distinguish here between two kinds of updates: an update to an edge whose endpoints do not change levels is called an \textit{invisible update}; an update to an edge whose endpoints do change levels is called a \textit{visible update}.

For invisible updates, defining \lp{s} is straightforward: the \lp of such an update occurs when the corresponding edge is actually modified:

\begin{definition}
    Let $U$ be an invisible update to edge $e = (u,v)$ during batch $\batch$. The \lp of $U$ occurs at the step that inserts $e$ into, or deletes $e$ from, $G$.
\end{definition}

In order to define the \lp of a visible update, we first show that each such update can be associated with a single DAG. To do so, we show that no updated edge ever crosses DAGs (i.e., has endpoints in different DAGs).

\begin{lemma}\label{lem:same-DAG}
    Let $e = (u,v)$ be an edge such that $e$ is updated during batch $\batch$ and both $u$ and $v$ change levels during $\batch$. Then $u$ and $v$ are part of the same DAG during $\batch$.
\end{lemma}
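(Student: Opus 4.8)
The plan is to reduce the lemma to a single bookkeeping property of the \texttt{mark} procedure in \cref{alg:update}: whenever a vertex becomes marked, it is immediately merged---via the \texttt{union} calls---with every already-marked neighbor with which it shares a batch-updated edge, i.e.\ with every vertex in its \texttt{marked\_batch\_neighbors} set (\cref{ln:marked-batch-neighbors}). Since $e=(u,v)$ is updated during $\batch$ and, by hypothesis, both endpoints change levels, each of $u$ and $v$ has \texttt{mark} invoked on it exactly once during $\batch$; call the moment a vertex's descriptor is installed on \cref{ln:write-v-desc} its \emph{marking time}. I would then proceed by cases on the relative marking times of $u$ and $v$.

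Suppose first that one endpoint, say $u$, is marked strictly before $v$ begins executing \texttt{mark}; this covers the typical situation (for example, $u$ and $v$ lie on different levels, or one of them moves only in a later round). When $v$'s invocation of \texttt{mark} builds \texttt{marked\_batch\_neighbors} on \cref{ln:marked-batch-neighbors}, it inspects each neighbor reachable via a batch-updated edge and finds $u$ marked, so $u\in\texttt{marked\_batch\_neighbors}$; hence the loop on \cref{ln:merge-start,ln:merge-end} executes \texttt{union}$(v,u)$, merging the DAG (union--find component) of $v$ with that of $u$. Because DAGs are only ever merged, never split, within a batch, $u$ and $v$ then remain in the same DAG for the remainder of $\batch$, as required. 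This argument never uses the direction of the update nor the levels at which $u$ and $v$ move, so the deletion case is identical---it is the \texttt{marked\_batch\_neighbors} mechanism, rather than the trigger rule, that forces the \texttt{union}$(v,u)$ call.

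The main obstacle is the remaining case, in which $u$ and $v$ are marked within the same parallel round---for an insertion this occurs when $e=(u,v)$ is inserted with $u,v\in\Vl$ and both violate \cref{inv:1} when level $\ell$ is processed. Now ``$u$ is already marked when $v$ reads its \texttt{marked\_batch\_neighbors} list'' need not hold a priori, and I would instead appeal to how a single level is processed in parallel. The approach is to establish, as an auxiliary invariant of the level-processing step, that for every batch-updated edge both of whose endpoints move away from the current level, at least one endpoint observes the other as marked at the time it forms its \texttt{marked\_batch\_neighbors} list---for instance because all descriptors of vertices leaving level $\ell$ are installed in a sub-phase that precedes the sub-phase performing the \texttt{union} calls, or, failing an explicit phase split, because the descriptor writes are linearizable and the endpoint appearing later in that linearization order necessarily sees the earlier write. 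Given this invariant, whichever of $u$ and $v$ is the observer issues \texttt{union}$(v,u)$, and the conclusion follows exactly as before. I expect making this auxiliary invariant precise---i.e.\ pinning down the interleaving guarantees of the parallel marking step---to be the technically delicate point; the rest is a direct read-off from \cref{alg:update}.
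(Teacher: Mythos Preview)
Your approach is essentially the paper's: assume one endpoint, say $u$, is marked before $v$, observe that $u$ then appears in $v$'s \texttt{marked\_batch\_neighbors} on \cref{ln:marked-batch-neighbors}, so the ensuing \texttt{union} call puts them in the same DAG, and DAGs never split within a batch. That is exactly the paper's argument, almost verbatim.

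Where you diverge is in carving out a second case for $u$ and $v$ being marked in the same parallel round and worrying about the interleaving of the \texttt{marked\_batch\_neighbors} read versus the descriptor write on \cref{ln:write-v-desc}. The paper's proof does \emph{not} treat this case separately: it simply begins ``Assume without loss of generality that $u$ becomes marked before $v$'' and proceeds directly. So your analysis is in fact more careful than the paper's own proof; the subtlety you flag (that in the pseudocode the descriptor is installed \emph{after} the \texttt{marked\_batch\_neighbors} scan, so two concurrent \texttt{mark} calls could each miss the other) is real given the pseudocode as written, and the paper does not resolve it---it is implicitly absorbed into the WLOG. Your proposed fix (a sub-phase barrier so that all descriptors at a level are installed before any \texttt{union} calls) is the natural way to close the gap, and is consistent with how batch-parallel algorithms of this kind are typically implemented, but you should be aware that the published proof simply does not engage with this point.
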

\begin{proof}
Assume without loss of generality that $u$ becomes marked before $v$. Then, when $v$ becomes marked, $u$ will be added to $v$'s \texttt{marked\_batch\_neighbors} at Line~\ref{ln:marked-batch-neighbors} of Algorithm~\ref{alg:update}, since $(u,v)\in\deltaBE$ ($e=(u,v)$ is modified during $\batch$) and $u$ is already marked. Then, all the DAGs whose roots are in \texttt{roots\_to\_merge} (including the DAG containing $u$) are merged and $v$ is added to the merged DAG (by making its parent be the new root of the merged DAG). Thus, at the end of the marking procedure for $v$, $u$ and $v$ will be in the same DAG. Since vertices do not leave a DAG during a batch after being added, $u$ and $v$ will be a part of the same DAG until the end of $\batch$.
\end{proof}

Thus, we can define the DAG of a visible update $U$ to edge $e$ to be the DAG of either one of $e$'s endpoints, since they are the same by Lemma~\ref{lem:same-DAG}.

\begin{definition}
    Let $U$ be a visible update to edge $e = (u,v)$ during batch $\batch$. 
    We define $DAG(U)$, the DAG associated to $U$ in $\batch$, to be the DAG which contains $u$ and $v$ in $\batch$.
\end{definition}

Since each visible update can be associated to a DAG, it might seem like a natural choice to define the \lp of such an update to occur when the root of its corresponding DAG is unmarked, since this is the moment after which concurrent reads will start returning the live level of vertices in that DAG. However, choosing the \lp in this way would cause potentially multiple updates to be linearized at the same moment, leaving linearizability ambiguous. We therefore use the ``epsilon trick"~\cite{DBLP:conf/spaa/CohenGZ18} to space out the \lp{s} of all updates that correspond to the same DAG. 

The epsilon trick works as follows. Let $T$ be the time when the DAG's root is unmarked. Let $T'$ be the time when the very next step is taken by any process in the system. We define the \lp of an update to edge $e$ to occur at $T + \varepsilon$, where $\varepsilon = (T' - T) \cdot id(e)\cdot 2/(n(n-1))$ and $id(e)$ is a unique identifier of $e$ between $1$ and $n(n-1)/2$ (the maximum number of edges in an $n$-vertex graph). In this way, all updates in the same DAG are effectively linearized at the same time (since all \lp{s} occur between $T$ and $T'$, and nothing happens in the system in that interval), and yet each update is given its own distinct \lp (since the $\varepsilon$ values for different edges are different). To summarize:

\begin{definition}
    Let $U$ be a visible update to edge $e = (u,v)$ during batch $\batch$. Let $D=DAG(U)$ and $T$ be the time when $D$'s root becomes unmarked during $\batch$. The \lp of $U$ is at $T + \varepsilon$, where $\varepsilon$ is defined as above.
\end{definition}

\subsubsection*{Linearization Points of Reads}
Defining \lp{s} is more straightforward for reads than for updates. A read $R$ may return in two ways: either (1) the coreness estimate using $v$'s old level, as stored in $v$'s descriptor, at Line~\ref{ln:return-old} in Algorithm~\ref{alg:read}, if $R$ found $v$'s DAG to be marked, or (2) the coreness estimate using $v$'s live level at Line~\ref{ln:return-live} otherwise. In the former case, we linearize $R$ when it last read $v$'s descriptor; in the latter case, we linearize $R$ when it last read $v$'s live level. More precisely (all line numbers below refer to Algorithm~\ref{alg:read}:

\begin{definition}
    Let $R$ be a read of the coreness estimate
    of $v$. If $R$ returns at Line~\ref{ln:return-old} in Algorithm~\ref{alg:read}, the \lp of $R$ is at its last execution of Line~\ref{ln:read-desc}. Otherwise, if $R$ returns at Line~\ref{ln:return-live}, the \lp of $R$ is at its last execution of Line~\ref{ln:l2}.
\end{definition}

\subsubsection{Linearization Points are Sound}
We prove linearizability by showing that (1) each operation's \lp falls between its invocation and response, and (2) each operation appears to take effect instantaneously at its \lp.

\begin{lemma}
    The \lp of an operation $O$---update or read---as defined above, falls between $O$'s invocation and response.
\end{lemma}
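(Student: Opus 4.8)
The plan is to case on the kind of operation and, in each case, locate the designated linearization point relative to the operation's execution interval. Three of the four cases are immediate. For an invisible update $U$ to edge $e$, its \lp is the step that actually modifies $e$ in $G$; this step is performed by $U$ itself (as part of the edge-insertion/deletion phase of its batch), so it trivially lies between $U$'s invocation and response. For a read $R$ that returns at Line~\ref{ln:return-old}, its \lp is the last execution of Line~\ref{ln:read-desc}, and for a read returning at Line~\ref{ln:return-live}, its \lp is the last execution of Line~\ref{ln:l2}; in both cases the \lp is the execution of a line of the read procedure in the final (non-retrying) loop iteration, hence is performed by $R$ and lies within $R$'s interval.

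The one case requiring work is a visible update $U$ to an edge $e$ in batch $\batch$, whose \lp is at $T+\varepsilon$, where $T$ is the time at which the root of $DAG(U)$ is unmarked and $\varepsilon = (T'-T)\cdot id(e)\cdot 2/(n(n-1))$ with $T'$ the first step taken by any process after $T$. For the lower bound, I would argue that $T$ occurs during the execution of $\batch$: by Lemma~\ref{lem:same-DAG}, $U$ is associated with a well-defined DAG $DAG(U)$, and this DAG's root is unmarked during the root-unmarking phase of \texttt{unmark\_all()} (Lines~\ref{ln:unmark-roots-start}--\ref{ln:unmark-roots-end}), which is part of $\batch$'s execution. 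Since in the batch-dynamic model every update of $\batch$ is in flight throughout $\batch$'s execution, the interval from $U$'s invocation to $U$'s response contains all of $\batch$ and in particular contains $T$; hence $U$'s invocation precedes $T$, and so it precedes $T+\varepsilon$ as well (note $\varepsilon > 0$ since $id(e) \ge 1$).

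For the upper bound, I would use the numerical properties of $\varepsilon$: since $1 \le id(e) \le n(n-1)/2$, we get $0 < \varepsilon \le T' - T$, hence $T + \varepsilon \le T'$. It then suffices to show that $T'$ is no later than $U$'s response. This follows because the unmarking of $DAG(U)$'s root is never the final step of $\batch$: it is followed either by further steps of the root-unmarking phase, by the non-root-unmarking phase (Lines~\ref{ln:unmark-other-start}--\ref{ln:unmark-other-end}), or at the very least by the step that finalizes $\batch$. Chaining the inequalities yields: (invocation of $U$) $\le T < T+\varepsilon \le$ (response of $U$), as desired.

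The step I expect to be the main obstacle is the corner case in which $T$ coincides with the last step of \texttt{unmark\_all()} --- for instance, when every DAG of $\batch$ is a single vertex, so the non-root-unmarking phase is empty. I would handle this by fixing, as a convention, that the response of every update in $\batch$ is the batch-finalization step that occurs once \texttt{unmark\_all()} has returned; this step is strictly after any root unmarking, so $T'$ remains no later than $U$'s response, and the argument above then goes through uniformly.
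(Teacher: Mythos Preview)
Your proof is correct and follows essentially the same case analysis as the paper's own proof. The paper handles the visible-update case more tersely---it simply notes that $T+\varepsilon$ falls after the root unmarking (which occurs during $\batch$) but before any subsequent step, including returning from $\batch$---so your explicit bounding of $\varepsilon$ and attention to the corner case are more detailed than needed, but not incorrect.
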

\begin{proof}
    In the case of reads, this follows immediately from the fact that both \lp alternatives are steps executed during the read. In the case of updates, this is also straightforward. If $U$ is an invisible update, then its \lp occurs when $U$ modifies its edge, which obviously occurs during $U$. If $U$ is a visible update in batch $\batch$, then its \lp occurs immediately after the root of $U$'s DAG is unmarked (which happens during $\batch$), but before any other step in the system, including returning from $\batch$. So in this case as well, the \lp must fall in between $U$'s invocation and response.
\end{proof}

To show that each operation appears to take effect instantaneously at its \lp, we show that the return value of operations reflects the ordering of their \lp{s}. More precisely, we consider two operations $O_1$ and $O_2$, such that the \lp of $O_1$ is before the \lp of $O_2$. We then show that the return values of $O_1$ and $O_2$ are the same as in a sequential execution in which $O_1$ precedes $O_2$. 

There are four cases to consider, corresponding to the four combinations of $(O_1,O_2) \in \{\text{read},\text{update}\} \times \{\text{read},\text{update}\}$. However, we only need to consider the two cases $(O_1,O_2) = (\text{read},\text{update})$ and $(O_1,O_2) = (\text{update},\text{read})$. he case $(O_1,O_2) = (\text{update},\text{update})$ trivially satisfies linearizability because updates do not have return values. Finally, the case $(O_1,O_2) = (\text{read},\text{read})$ follows by transitivity from the other cases.

Consider the case $(O_1,O_2) = (R,U) =  (\text{read},\text{update})$. Here we want to show that $R$, being linearized before $U$, cannot return the ``new'' value created by $U$. If the \lp of $R$ falls in a strictly earlier batch than the \lp of $U$, then clearly $R$ cannot return the value created by $U$. If the \lp of $R$ falls in the same batch as that of $U$, then we consider two sub-cases:
\begin{enumerate}
    \item $R$ returns the coreness estimate using an old level value from a descriptor at Line~\ref{ln:return-old} of Algorithm~\ref{alg:read}. It is impossible for $R$ to return the coreness estimate using the new level in this case, since a value cannot be created and written to a descriptor in the same batch (a vertex reaches a new level at the end of the batch, whereas descriptors are created at the beginning of the batch).
    \item $R$ returns the coreness estimate using the live level of some vertex $v$ at Line~\ref{ln:return-live} of Algorithm~\ref{alg:read}. Consider the last execution by $R$ of Lines~\ref{ln:b1}--\ref{ln:b2}. It must be that (1) \texttt{check\_DAG} returned \texttt{UNMARKED} at Line~\ref{ln:check-dag} and that (2) both calls to \texttt{LDS.get\_level(v)} returned the same value. The call to \texttt{LDS.get\_level(v)} at Line~\ref{ln:l1} must have occurred while $v$ was still unmarked, and thus before $U$ started changing $v$'s level. Therefore, $R$ cannot return the new value created by $U$. 
\end{enumerate}

Now consider the case $(O_1,O_2) = (U,R) = (\text{update},\text{read})$. We want to show that $R$, being linearized after $U$, cannot return the ``old'' value before $U$ took effect. If the \lp of $R$ falls in a strictly later batch than the \lp of $U$, then $R$ cannot return the old value, since that value does not exist anymore at the end of $U$'s batch: any descriptor containing the old value is deleted (unmarked), and the live level of all vertices will reflect the new value. If the \lp of $R$ falls in the same batch as that of $U$, then we again consider two sub-cases: 
\begin{enumerate}
    \item $R$ return the coreness estimate using an old level value from a descriptor at Line~\ref{ln:return-old} of Algorithm~\ref{alg:read}. This is not possible: for $R$ to linearize after $U$ in the same batch and for $R$ to return at Line~\ref{ln:return-old}, $R$ would have had to (1) execute Line~\ref{ln:read-desc} after the DAG of $U$ was unmarked \textit{and} (2) find the DAG to be marked. But (1) and (2) contradict each other, so this case is impossible. 
    \item $R$ returns the coreness estimate using the live level of some vertex $v$ at Line~\ref{ln:return-live} of Algorithm~\ref{alg:read}. Consider the last execution by $R$ of Lines~\ref{ln:b1}--\ref{ln:b2}. It must be that (1) \texttt{check\_DAG} returned \texttt{UNMARKED} at Line~\ref{ln:check-dag} and that (2) both calls to \texttt{LDS.get\_level(v)} returned the same value. The last call to \texttt{LDS.get\_level(v)} at Line~\ref{ln:l2} must have occurred while $v$ was already unmarked, and thus after $U$ stopped changing $v$'s level. Therefore, $R$ cannot return the old value before $U$ took effect (nor any intermediate value between the old and new values).
\end{enumerate}

\subsection{Liveness}
Liveness is straightforward for our algorithm. Batches of updates eventually complete, because each batch consists of a finite number of steps: update $G$ by inserting or deleting edges, and then traverse each level in the LDS at most once and update vertices' levels where necessary. Since update processes are synchronous and do not fail, each batch terminates in a finite number of steps taken by update processes.

Reads operations are lock-free in our algorithm. To see this, note that reads need to restart in two situations: (1) if the batch number changes between lines~\ref{ln:b1} and \ref{ln:b2} in Algorithm~\ref{alg:read}, or (2) if the live level changes between lines~\ref{ln:l1} and \ref{ln:l2} in the same algorithm. In both cases, some update operation has made progress. Thus, if a read operation is delayed forever, it must be the case that other operations (updates) have made progress infinitely often, as required by lock-freedom.

\fi

\subsection{Approximation Guarantees}
The level that a reader uses to compute the coreness estimate will correspond to the level of the vertex during some point in time in between update batches. This is because when a reader returns  a coreness estimate, it never sees an intermediate level of the vertex (it uses the level either at the beginning of a batch or at the end of it).
Therefore, when compared to the true coreness value of the vertex at a point in time between two consecutive update batches,
we maintain the $(2+\epsilon)$-approximation guarantee as in the algorithm by Liu et al.~\cite{plds}.

Note that using unsynchronized reads can return coreness values of vertices using intermediate levels within a batch, and the error can be unbounded with respect to the true coreness values at both the beginning and the end of the batch. For example, consider a batch of insertions that causes a vertex $v$ to move up from group $g$ to group $g+i$, for $i=O(\log_{1+\delta} n)$ (there are $\log_{1+\delta} n$ groups in the level data structure).
An unsynchronized read can see the vertex $v$ in any group in $[g,\ldots,g+i]$. In the worst case, we return the coreness estimate of $v$ at group $g+i/2$. According to \cref{def:core-estimate-number}, this will increase the error by a multiplicative factor of $(1+\delta)^{i/2}=O(\sqrt{n})$ relative to the guarantee in Lemma~\ref{lem:core-num}, no matter whether we compare to the ground truth at the beginning or at the end of the batch.
\section{Experimental Evaluation} 
In this section, we implement our algorithm and test it against various
baselines to determine the latency, throughput, and accuracy of our 
reads and updates. We implement our algorithms on top of the parallel level data
structure (PLDS) in Liu et al.~\cite{plds} which uses the
Graph Based Benchmark Suite (GBBS)~\cite{DhulipalaBS19}. Our results show that 
our algorithms decreases the latency of reads compared to synchronous 
implementations by up to \emph{five orders of magnitude}.

\myparagraph{Evaluated Algorithms} 
We compare our \cplds against two baseline algorithms that we also implement. 
First, we compare our \cplds against a synchronous implementation (\syncreads) where all reads
must wait until all updates are performed in the batch before the reads can be performed.
We also compare against a non-synchronous version (\nonlin) of our algorithm where reads can 
be done at \emph{any time} in the batch. This algorithm is not linearizable. We obtain 
\emph{orders-of-magnitude improvements} on the accuracy of our reads 
against the non-linearizable (\nonlin) implementation and on the 
latency against the synchronous (\syncreads) algorithm. 

\myparagraph{Experimental Setup}
We use a \texttt{c2-standard-60} Google Cloud
instance (3.1 GHz Intel Xeon Cascade Lake CPUs with a total of 30 cores with two-way hyper-threading, and 236 GiB RAM)
and an \texttt{m1-megamem-96} Google Cloud instance (2.0 GHz Intel Xeon Skylake CPUs with a total of 48
cores with two-way hyper-threading, and 1433.6 GB
RAM). We do not use hyper-threading in our experiments as we found it not to improve performance.
Our programs are written in C++, use a work-stealing scheduler~\cite{BlAnDh20}, and
are compiled using \texttt{g++} (version 7.5.0) with the \texttt{-O3}
flag.  We terminate experiments that take over 2 hours. 

We test our algorithms on batches of \defn{insertions} and \defn{deletions}. Unless specified otherwise, all experiments
are conducted on batches of $10^6$ edges. 
We run each experiment for $11$ trials, and we 
compute the mean and maximum results for each experiment.

\myparagraph{Datasets} We use datasets from the Stanford Network Analysis Project (SNAP),
the Network Respository, and the DIMACS Shortest Paths challenge,
specifically, the datasets used by Liu et al.~\cite{plds}
in their experimental evaluation: com-DBLP (\dblp), com-LiveJournal (\lj), com-Orkut 
(\orkut), com-Youtube (\yt), wiki-talk (\wiki), sx-stackoverflow (\stack), 
twitter (\twitter)~\cite{kwak2010twitter}, human-Jung2015-M87113878 (\brain),
full USA (\usa), and central USA (\ctr). Graph characteristics are given in~\cref{table:dataset}.

\begin{table}[t]
\begin{center}
\footnotesize
\begin{tabular}[!t]{l|r|r|r}
\toprule
{Graph Dataset} & Num. Vertices & Num. Edges & Largest value of $k$\\
\midrule
{\emph{ dblp  }  }           & 317,080          &1,049,866   & 113\\
{\emph{ brain        }  }        & 784,262          &267,844,669  & 1200\\
{\emph{ wiki  }  } & 1,094,018        &2,787,967  & 124 \\
{\emph{ youtube (yt) }  }         & 1,138,499        &2,990,443 & 51 \\
{\emph{ stackoverflow (so) }  }    & 2,584,164        &28,183,518 & 198 \\
{\emph{ livejournal (lj) }  }    		 & 4,846,609  &42,851,237 & 372 \\
{\emph{ orkut        }  }    & 3,072,441        &117,185,083 & 253 \\
{\emph{ ctr     }  }     & 14,081,816       &16,933,413 & 3 \\
{\emph{ usa        }  }     & 23,947,347       &28,854,312  & 3\\
{\emph{ twitter      }  }    	 & 41,652,230       &1,202,513,046 & 2488\\
\end{tabular}
\end{center}
\caption{Graph sizes and largest values of $k$ for $k$-core decomposition.
} \label{table:dataset}
\end{table}

\myparagraph{Implementation Details}
All of our code is publicly available.\footnote{\href{https://github.com/qqliu/batch-dynamic-kcore-decomposition/tree/master/gbbs/benchmarks/EdgeOrientation/ConcurrentPLDS}{https://github.com/qqliu/batch-dynamic-kcore-decomposition/tree/master/gbbs/benchmarks/EdgeOrientation/ConcurrentPLDS}}
We make use of the optimization feature given in the original PLDS code with the \texttt{-opt}
flag set to $20$. This optimization feature 
speeds up the code but degrades its approximation error.
We set the parameters $\delta = 0.2$ and $\lambda = 9$. The theoretical 
approximation factor using these parameters is $2.8$ (i.e., $\eps=0.8$). Our experiments demonstrate we never exceed the 
maximum approximation factor obtained by the original PLDS implementation for each dataset. %
We test our implementations on combinations of different numbers of reader and update threads. 
Each thread is on a separate core with no other reader or update threads. 
We test combinations of $1$, $2$, $4$, $8$, and $15$ reader and update threads. 

\myparagraph{Latency}
First, we measured the latency of reads using all three implementations on all of the graphs. 
For all algorithms, each read thread  continuously generates reads of vertices chosen uniformly at 
random for the duration of the batch.
Reads for \cplds are implemented and performed according to our algorithms. \nonlin
performs reads immediately by looking at the current level of the vertex.
Each read thread in \syncreads maintains an array of reads in the order that they are generated
during each update batch and performs the reads, in order, at the end of the batch.

For each implementation and graph, we obtain the average,
$99$-th percentile latency, and $99.99$-th percentile latency across all reads and all trials.
The results are shown in~\cref{fig:latency}. We see that against \syncreads, our \cplds
algorithm achieves up to \emph{five orders of magnitude} smaller latency for both insertions and 
deletions for the average, $99$-th percentile and $99.99$-th percentile latencies. 
This is because in \syncreads, the reads that arrive must wait until the end of the batch before they can execute. 
Compared to \nonlin, reads are at most a $3.21$-factor slower in \cplds, but are linearizable.

\begin{figure*}[!t]
    \centering
    \begin{subfigure}[b]{0.28\textwidth}
        \centering
        \includegraphics[width=\textwidth]{./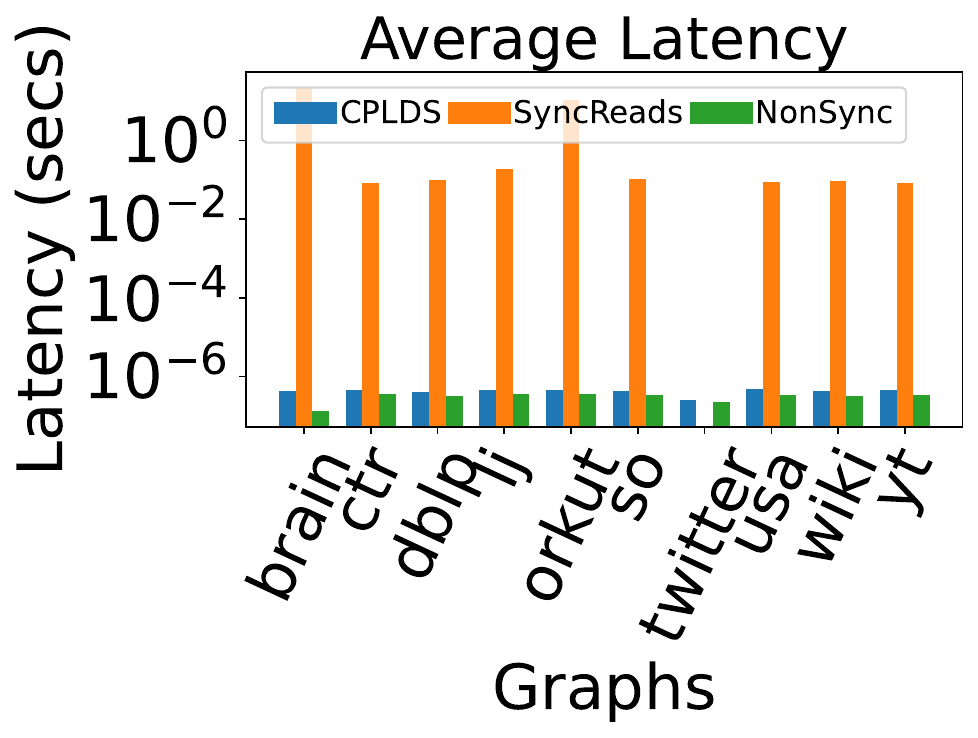}
        \caption{Average Insertion Latency}\label{fig:avg-latency}
    \end{subfigure}
        \hfill
    \begin{subfigure}[b]{0.28\textwidth}
        \centering
        \includegraphics[width=\textwidth]{./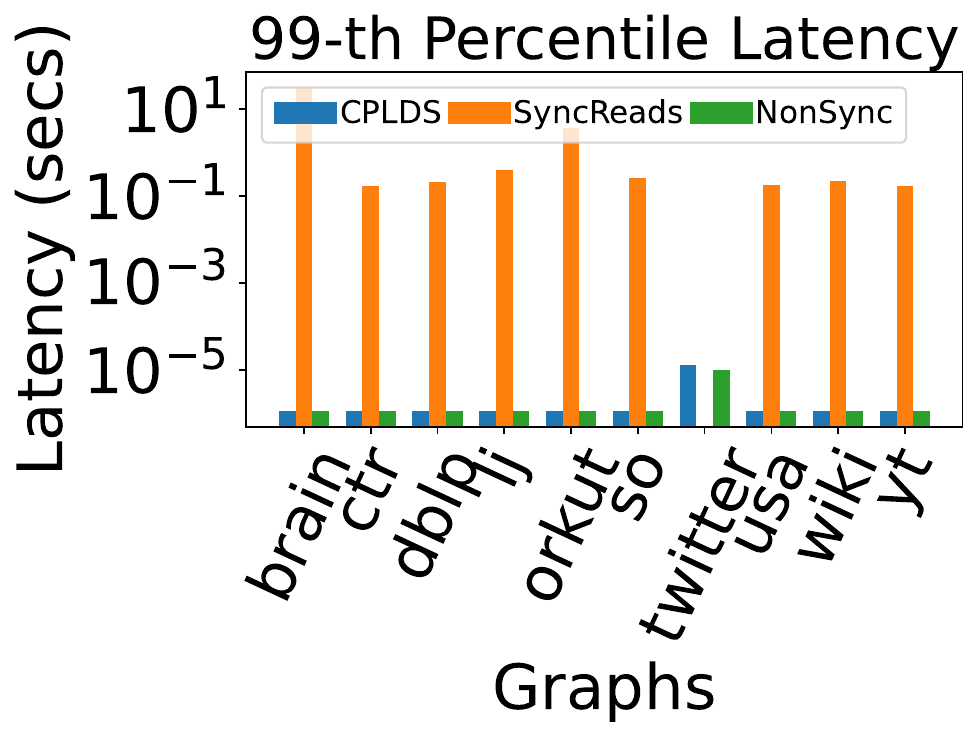}
        \caption{$99$-th Percentile Insertions Latency}\label{fig:99-latency}
    \end{subfigure}
        \hfill
    \begin{subfigure}[b]{0.28\textwidth}
        \centering
        \includegraphics[width=\textwidth]{./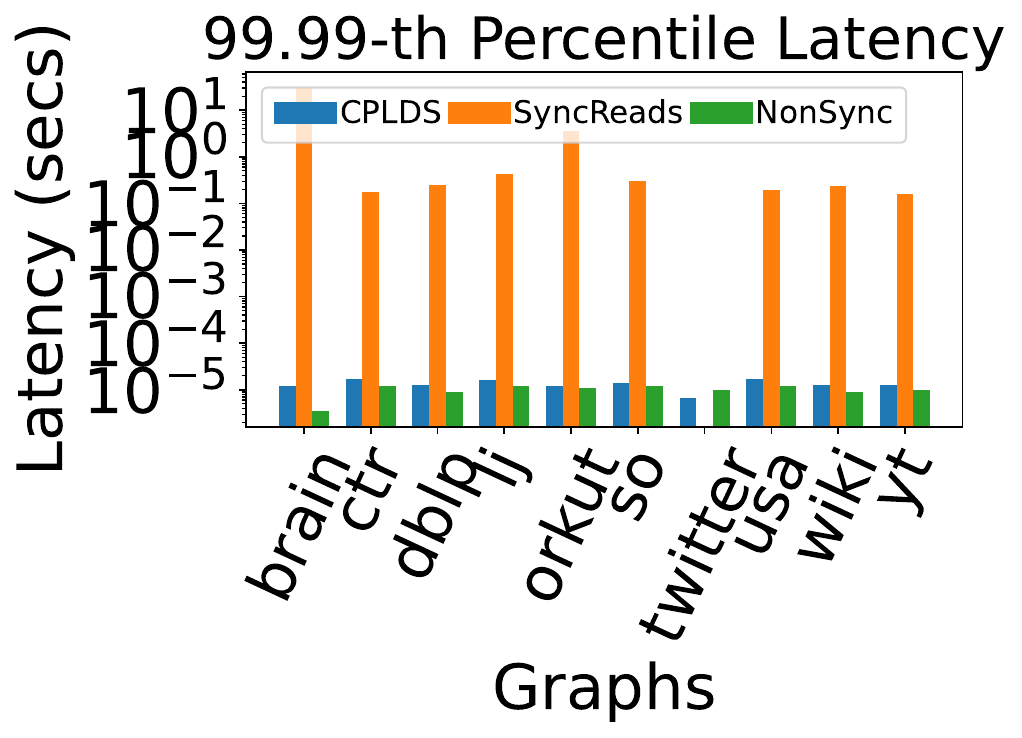}
        \caption{$99.99$-th Insertions Latency}\label{fig:99999-latency}
    \end{subfigure}

    \begin{subfigure}[b]{0.28\textwidth}
        \centering
        \includegraphics[width=\textwidth]{./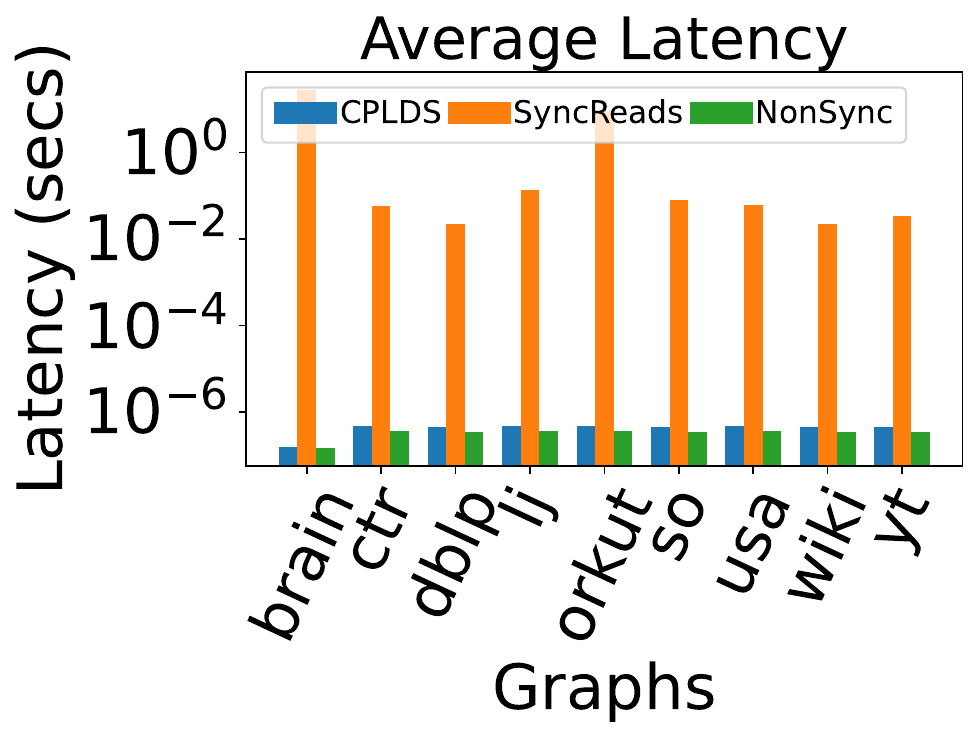}
        \caption{Average Deletions Latency}\label{fig:avg-latency-del}
    \end{subfigure}
        \hfill
    \begin{subfigure}[b]{0.28\textwidth}
        \centering
        \includegraphics[width=\textwidth]{./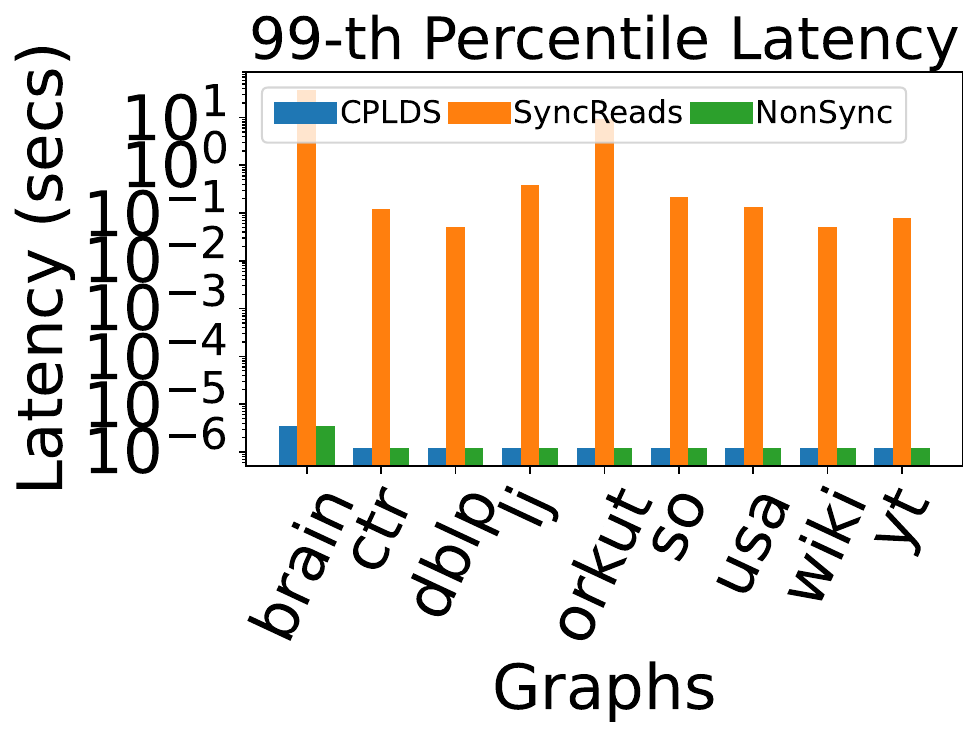}
        \caption{$99$-th Percentile Deletions Latency}\label{fig:99-latency-del}
    \end{subfigure}
        \hfill
    \begin{subfigure}[b]{0.28\textwidth}
        \centering
        \includegraphics[width=\textwidth]{./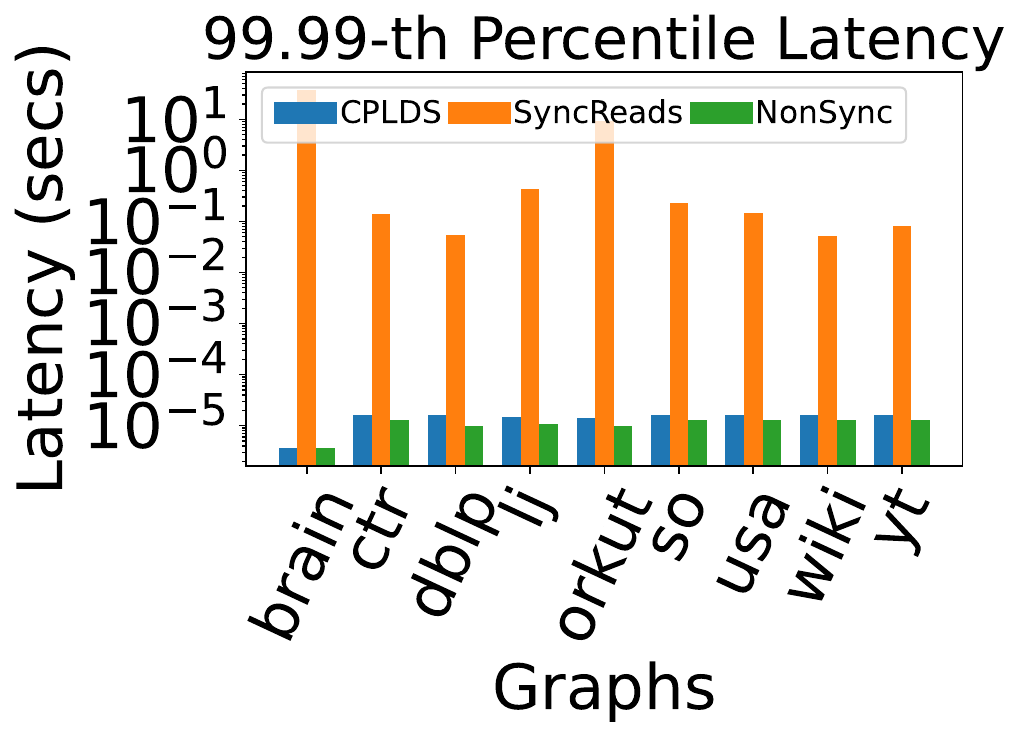}
        \caption{$99.99$-th Deletions Latency}\label{fig:99999-latency-del}
    \end{subfigure}
    \caption{
        Comparison of the average, $99$-th percentile, and $99.99$-th percentile read latencies of the implementations under batches of insertions or deletions. %
        The $y$-axis is in log-scale. Twitter times out for \syncreads and we do not show their results.
    }\label{fig:latency}
\end{figure*}

\myparagraph{Batch Size vs.\ Latency}
\cref{fig:batch-size-latency} shows the latency of reads across multiple insertion batch sizes for all three implementations.
Specifically, we show the average, $99$-th percentile, and $99.99$-th percentile latencies for \dblp
and \lj. 
For \yt, the average latency is $1.12$--$1.38$ factor larger for \cplds than \nonlin
but is \emph{at least seven orders of magnitude smaller} than \syncreads. 
For the $99$-th percentile latency on \dblp, \cplds and 
\nonlin exhibit the same latency and \cplds exhibits smaller latency than \syncreads by up to seven orders of magnitude.
Finally, for the $99.99$-th percentile latency on \dblp, 
\cplds exhibits larger latency than \nonlin by up to a factor of $3.98$, but exhibits up to five orders 
of magnitude smaller latency than \syncreads.

For \dblp, the average latency is $1$--$1.70$ factor larger for \cplds than \nonlin
but is \emph{at least five orders of magnitude smaller} than \syncreads. 
For the $99$-th percentile on \dblp, \cplds and 
\nonlin exhibit the same latency and \cplds exhibits smaller latency than \syncreads by up to six orders of magnitude.
Finally, for the $99.99$-th percentile on \dblp, 
\cplds exhibits larger latency than \nonlin by up to a factor of $1.88$, but exhibits up to five orders 
of magnitude smaller latency than \syncreads. Deletions follow
a similar trend: for \dblp, the average, $99$-th percentile and $99.99$-th
percentile latencies for \cplds are up to $1.84$, $1.0$, 
and $1.66$ factors, respectively, larger than \nonlin. Compared to \syncreads,
\cplds exhibits up to six orders of magnitude smaller lantencies on \dblp 
and up to seven orders of magnitude smaller latencies on \yt. 
For \yt, the average, $99$-th percentile, and $99.99$-th
percentile latencies for \cplds are up to $1.44$, $1.0$, and $2.33$ factors, respectively, larger
than \nonlin.

We found that deletions follow a similar trend.

\begin{figure*}[!t]
    \centering
    \begin{subfigure}[b]{0.28\textwidth}
        \centering
        \includegraphics[width=\textwidth]{./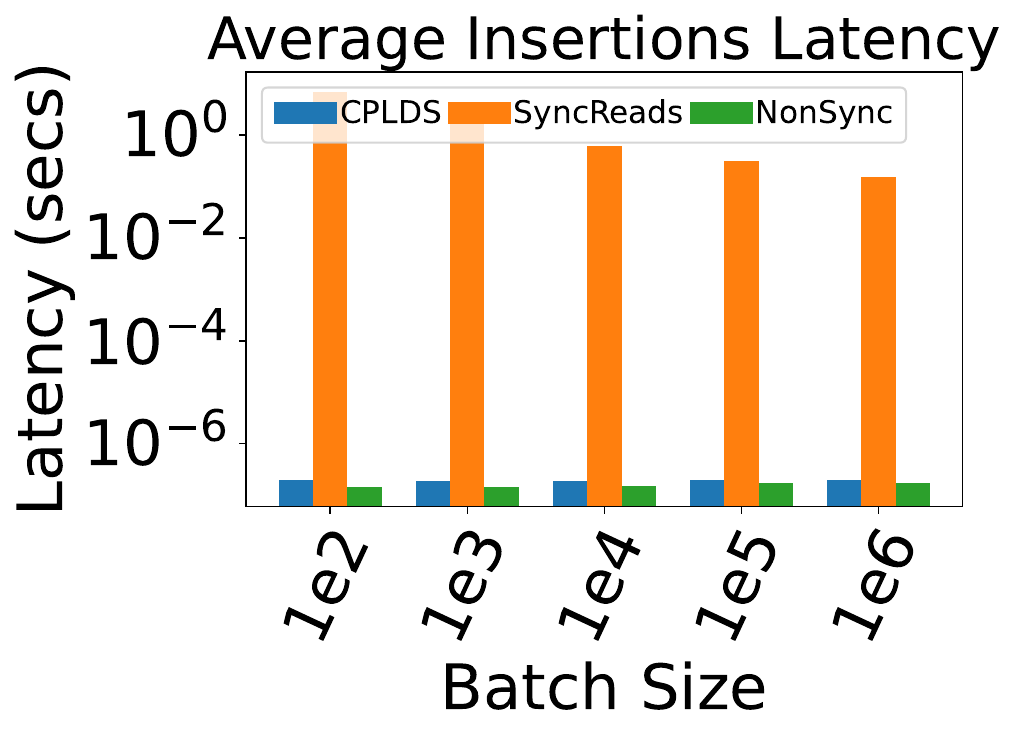}
        \caption{YouTube Average Insertions}\label{fig:avg-latency-batch-yt}
    \end{subfigure}
    \hfill
    \begin{subfigure}[b]{0.28\textwidth}
        \centering
        \includegraphics[width=\textwidth]{./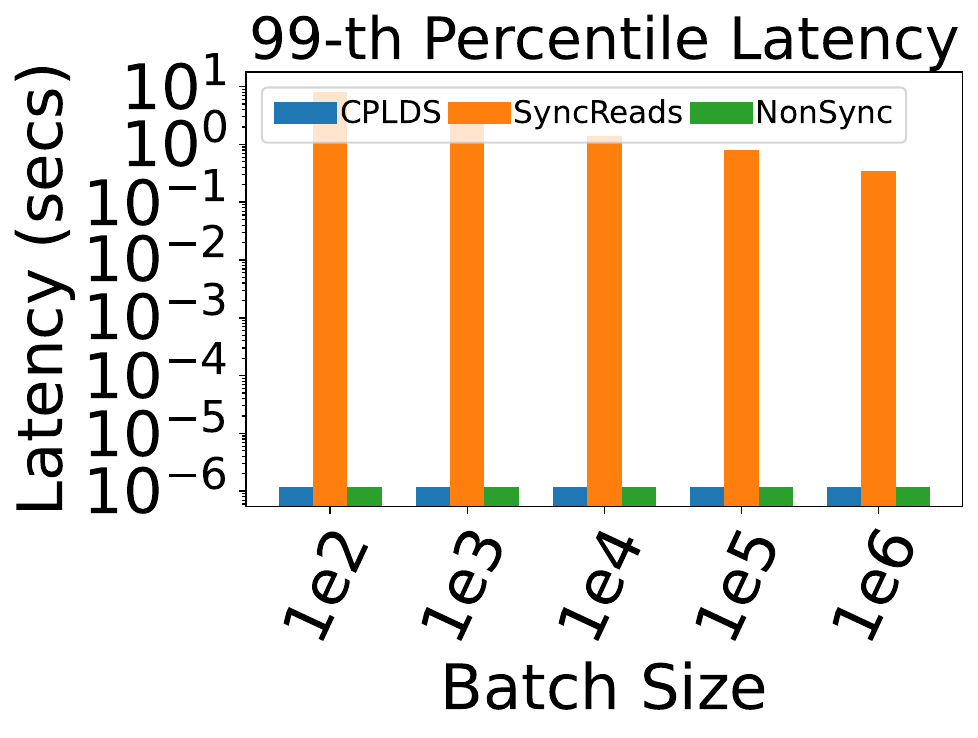}
        \caption{YouTube 99-th Percentile Insertions}\label{fig:99-latency-batch-yt}
    \end{subfigure}
    \hfill
    \begin{subfigure}[b]{0.28\textwidth}
        \centering
        \includegraphics[width=\textwidth]{./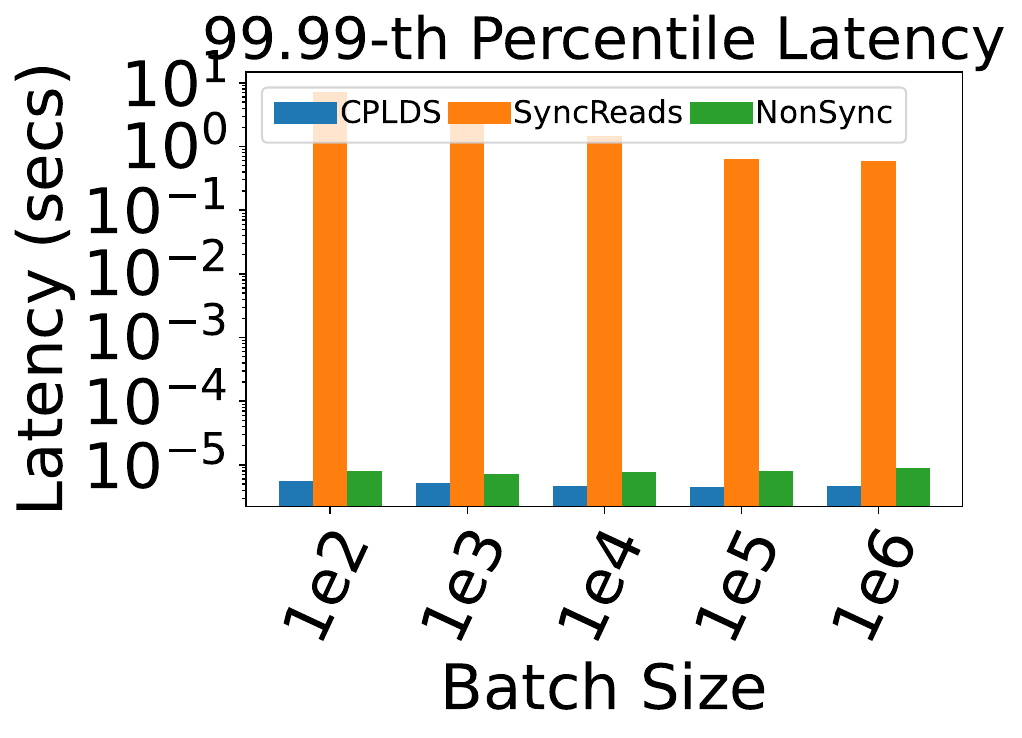}
        \caption{YouTube 99.99-th Insertions}\label{fig:9999-latency-batch-yt}
    \end{subfigure}

    \begin{subfigure}[b]{0.28\textwidth}
        \centering
        \includegraphics[width=\textwidth]{./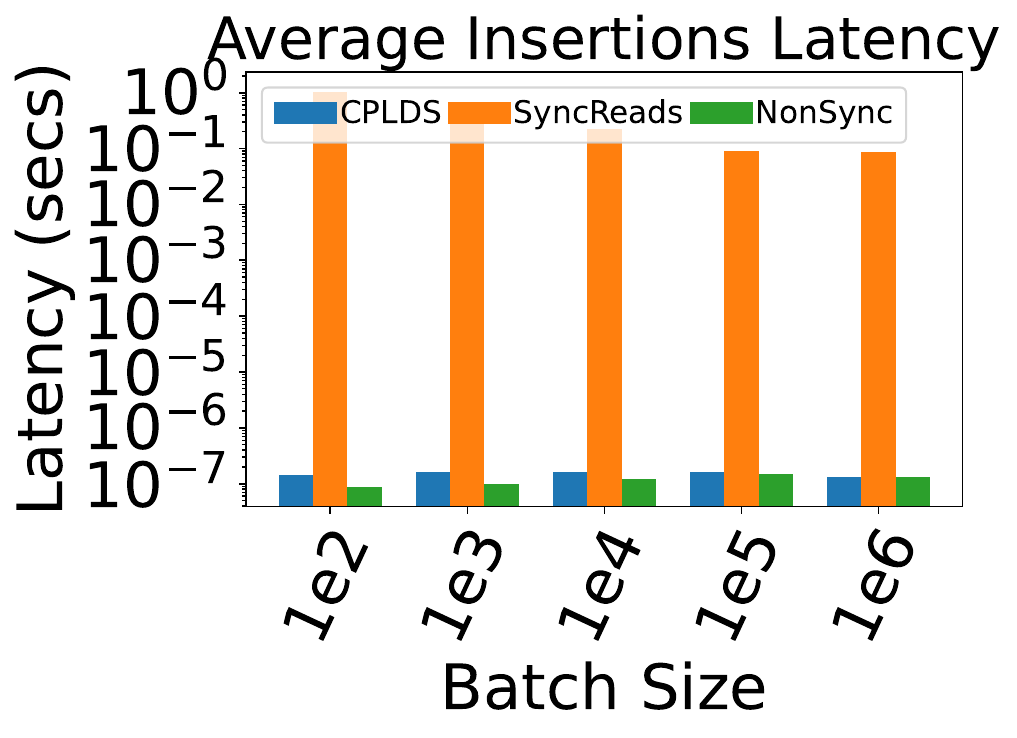}
        \caption{DBLP Average Insertions}\label{fig:avg-latency-batch-dblp}
    \end{subfigure}
        \hfill
    \begin{subfigure}[b]{0.28\textwidth}
        \centering
        \includegraphics[width=\textwidth]{./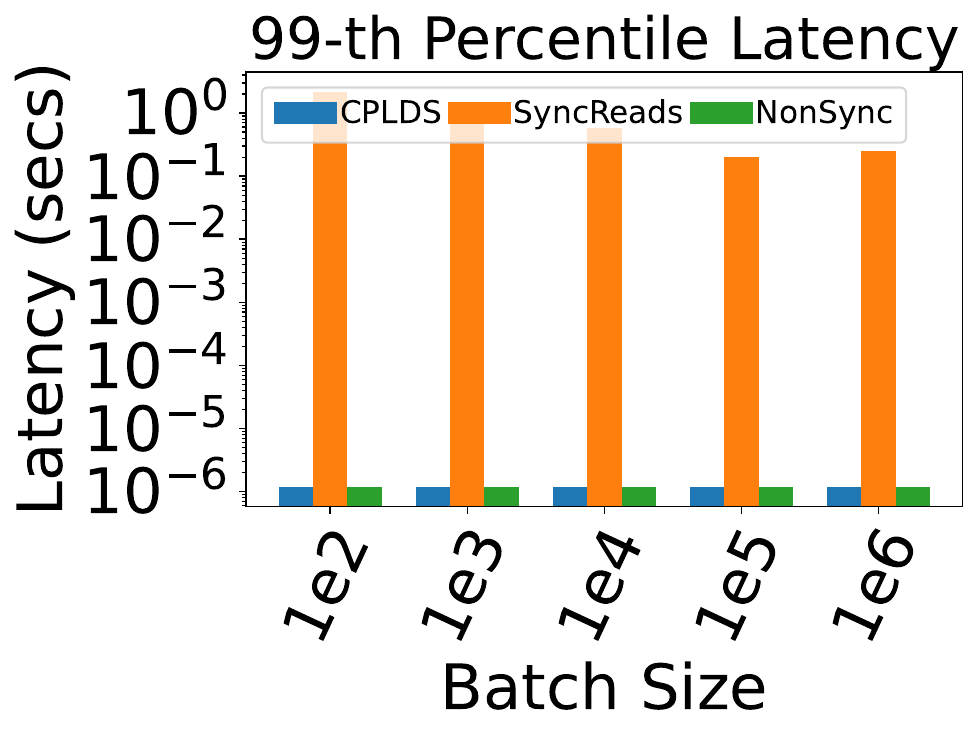}
        \caption{DBLP 99-th Percentile Insertions}\label{fig:99-latency-batch-dblp}
    \end{subfigure}
        \hfill
    \begin{subfigure}[b]{0.28\textwidth}
        \centering
        \includegraphics[width=\textwidth]{./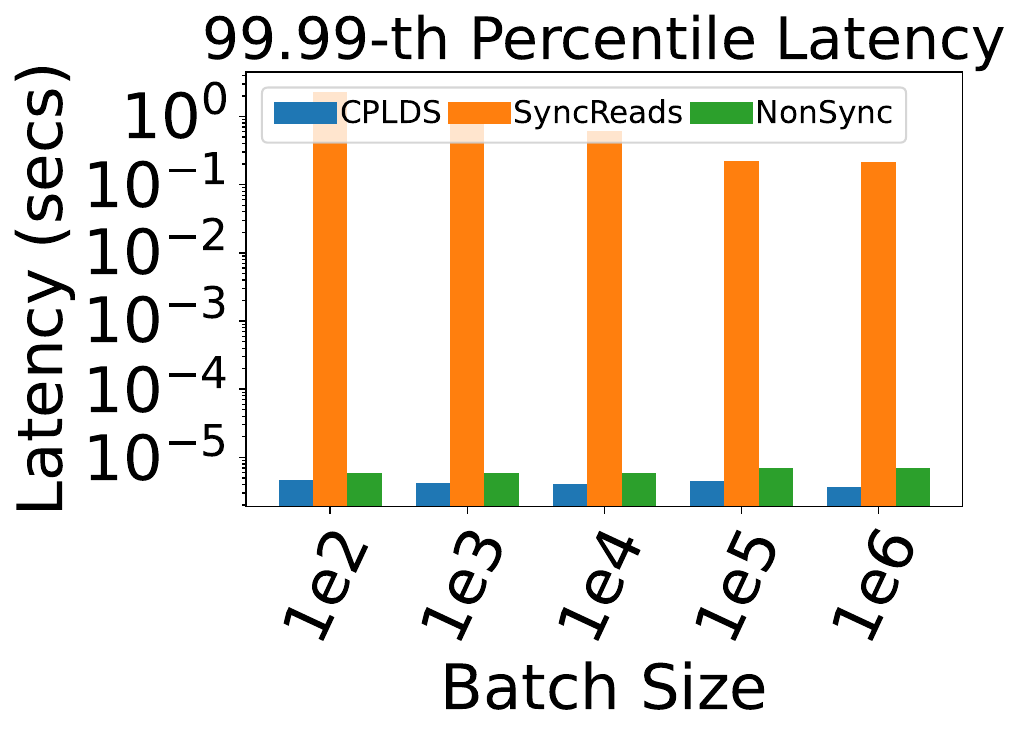}
        \caption{DBLP 99.99-th Insertions}\label{fig:9999-latency-batch-dblp}
    \end{subfigure}
    \caption{
        Comparison of the latencies over different insertion
        batch sizes using
        $15$ update threads and $15$ read threads. The $y$-axis is in log-scale.  
        We tested on \yt and \dblp. %
    }\label{fig:batch-size-latency}
\end{figure*}

\myparagraph{Update Time}
\cref{fig:update} shows the average and maximum update times throughout all of our trials on all graphs.
We see that \nonlin requires the least amount of update time, although our algorithm is at most
 $1.48$x slower for both insertions and deletions. The reason that 
\syncreads requires more time sometimes (up to $1.85$ factor worse) than the other methods is due to the fact that reads occur synchronously
and must factor into the update time (since updates are blocked and cannot be performed until all 
synchronous reads finish). We see that for most graphs, \nonlin results in the lowest update time
because the updates methods did not change compared to the previous synchronous PLDS implementation of~\cite{plds}.

\begin{figure}[!t]
    \centering
    \begin{subfigure}[b]{0.33\textwidth}
        \centering
        \includegraphics[width=\textwidth]{./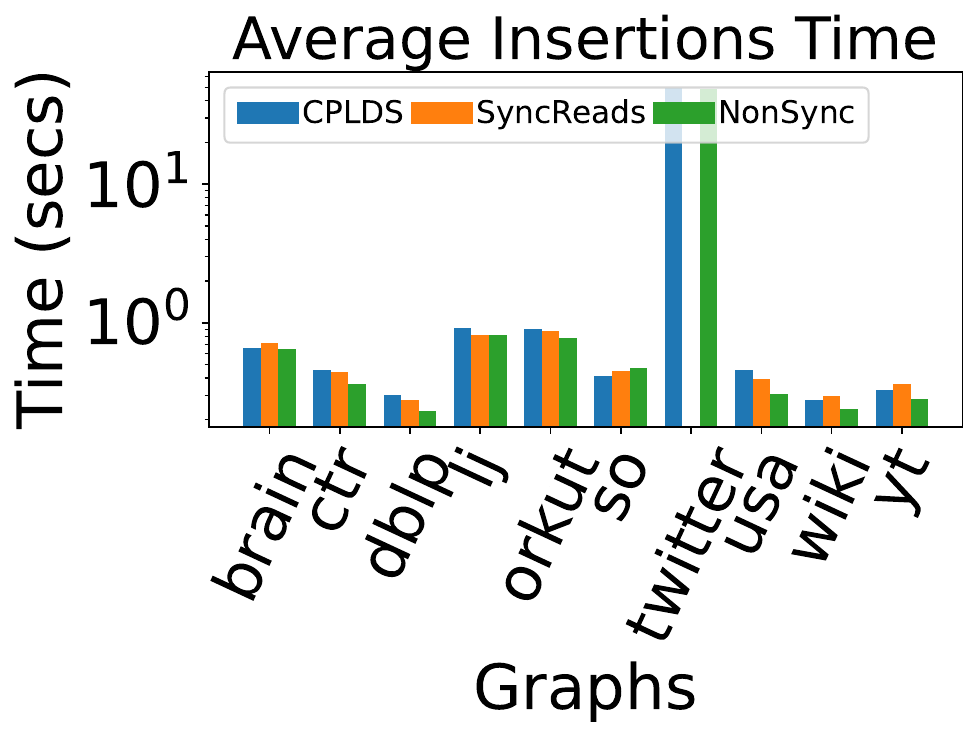}
        \caption{Average Insertions Batch Update Time}\label{fig:avg-insertions-update-time}
    \end{subfigure}
    \begin{subfigure}[b]{0.33\textwidth}
        \centering
        \includegraphics[width=\textwidth]{./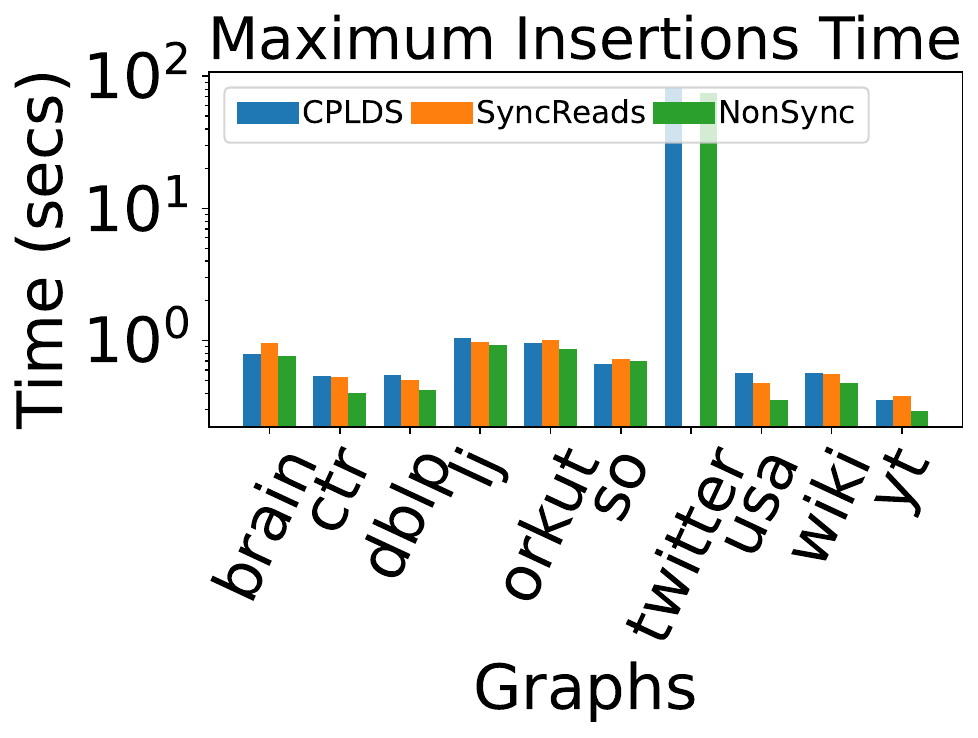}
        \caption{Maximum Insertions Batch Update Time}\label{fig:max-insertions-update-time}
    \end{subfigure}
    \begin{subfigure}[b]{0.33\textwidth}
        \centering
        \includegraphics[width=\textwidth]{./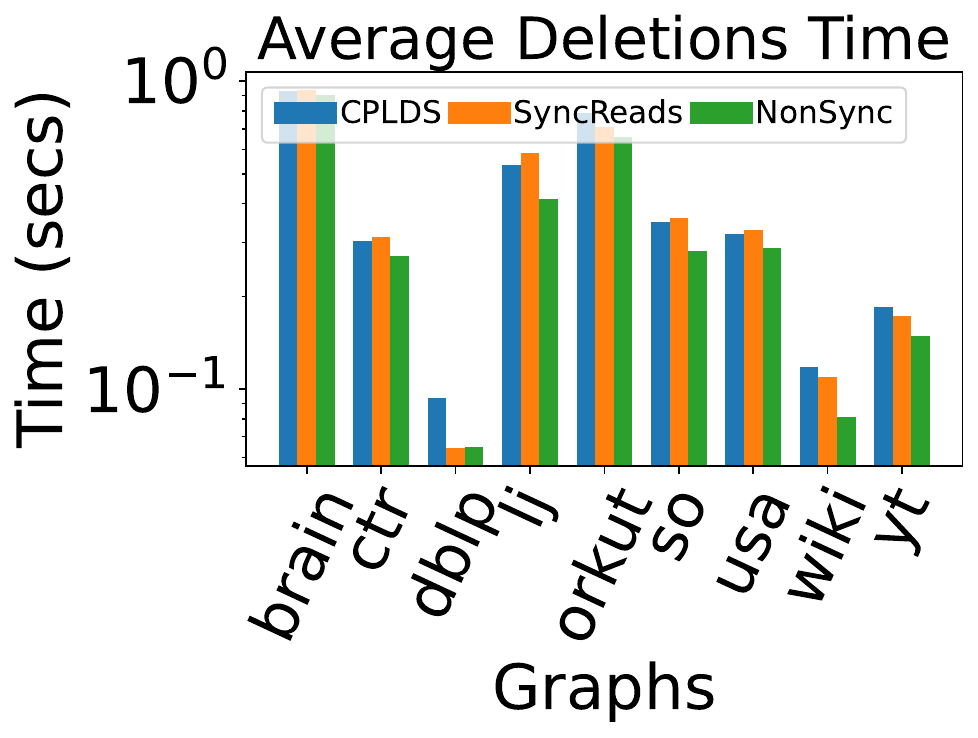}
        \caption{Average Deletions Batch Update Time}\label{fig:avg-deletions-update-time}
    \end{subfigure}
    \begin{subfigure}[b]{0.33\textwidth}
        \centering
        \includegraphics[width=\textwidth]{./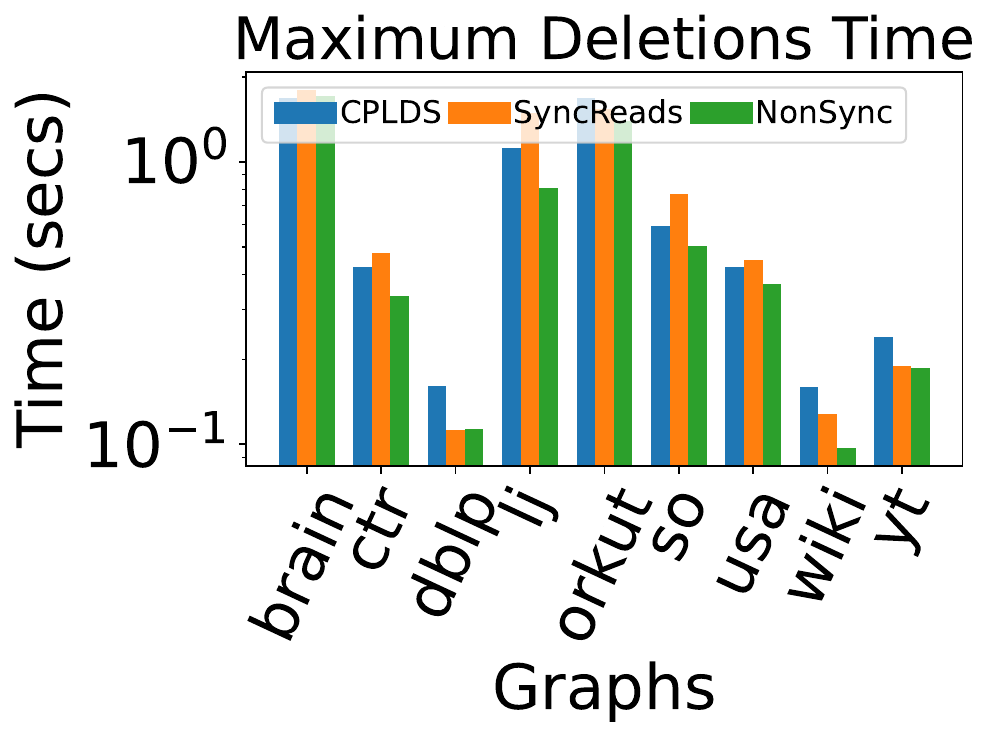}
        \caption{Maximum Deletions Batch Update Time}\label{fig:max-deletions-update-time}
    \end{subfigure}
    \caption{
        Comparison of the average and maximum batch update time over all batches and  trials using
        $15$ update threads and $15$ read threads. The $y$-axis is in log-scale.  
        Twitter times out for \syncreads and we do not show their results.
    }\label{fig:update}
\end{figure}

\begin{figure}[!hpt]
    \centering
    \begin{subfigure}[b]{0.33\textwidth}
        \centering
        \includegraphics[width=\textwidth]{./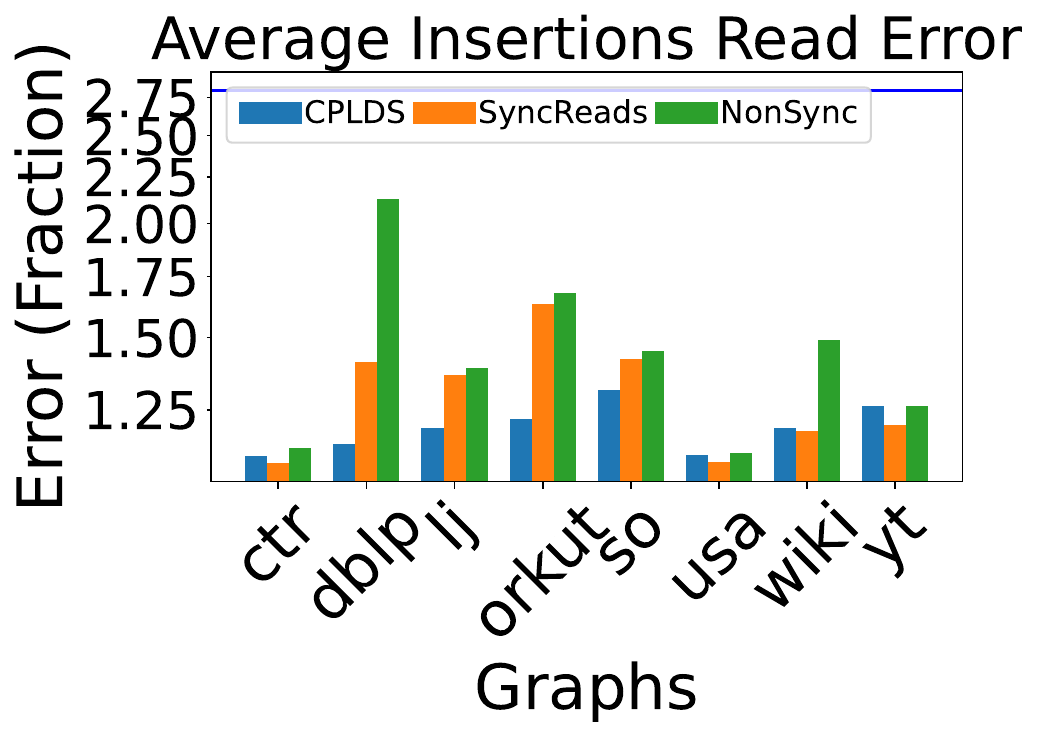}
        \caption{Average Insertions Read Error}\label{fig:avg-insertions-read-error}
    \end{subfigure}
    \begin{subfigure}[b]{0.33\textwidth}
        \centering
        \includegraphics[width=\textwidth]{./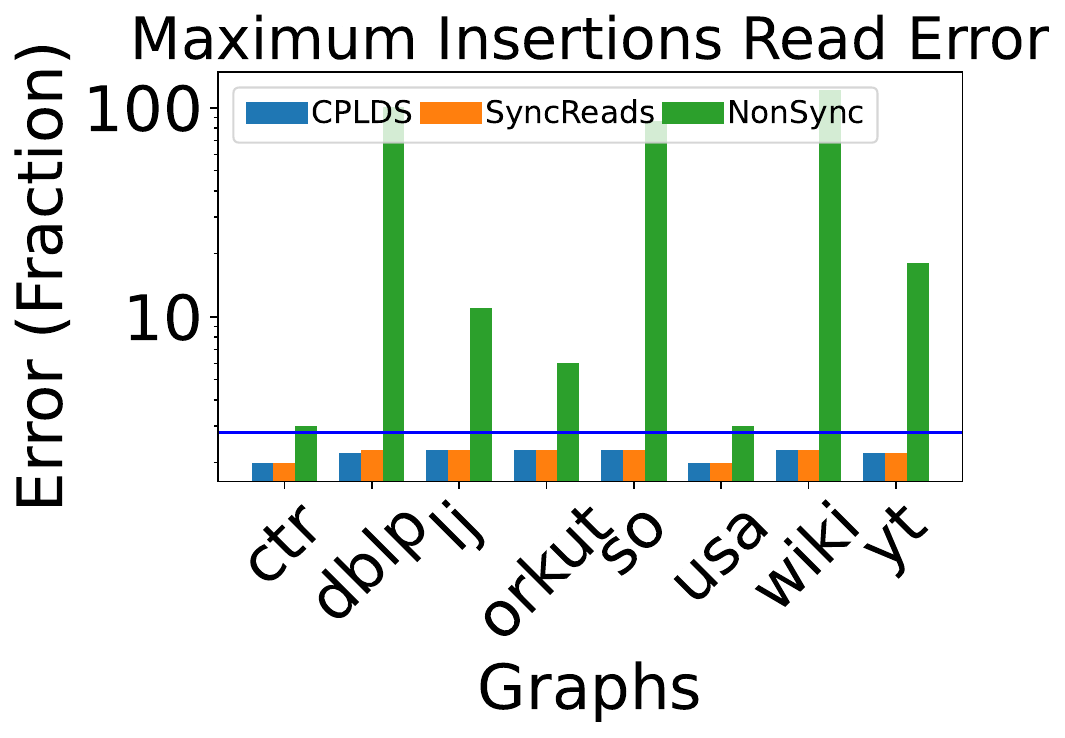}
        \caption{Maximum Insertions Read Error}\label{fig:max-insertions-read-error}
    \end{subfigure}
    \begin{subfigure}[b]{0.33\textwidth}
        \centering
        \includegraphics[width=\textwidth]{./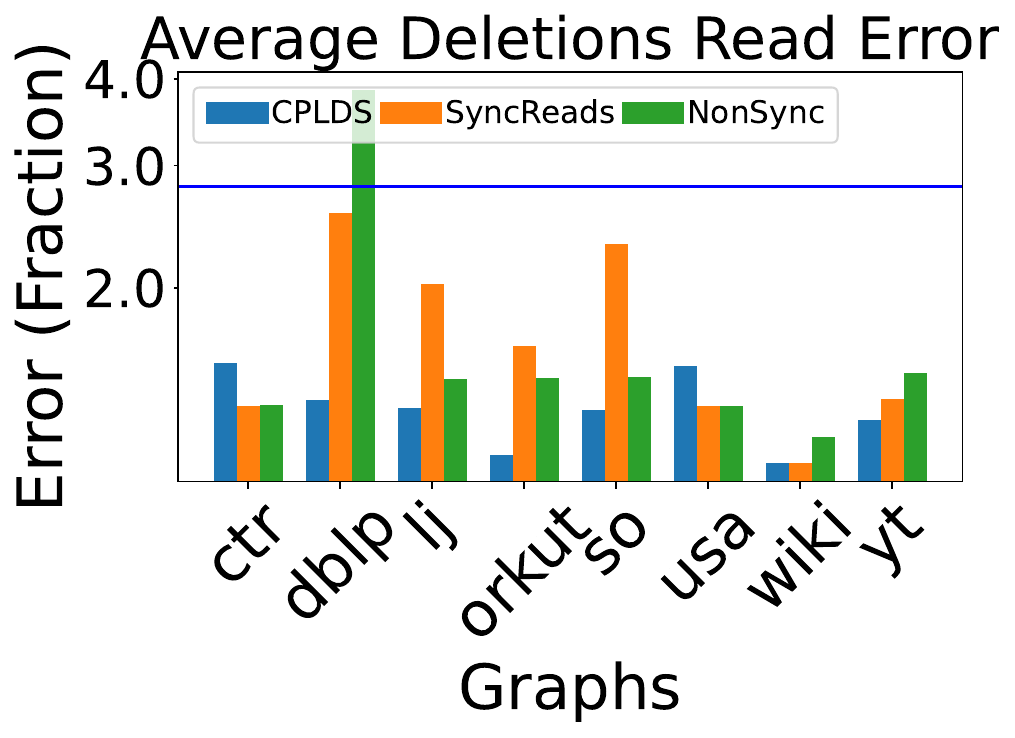}
        \caption{Average Deletions Read Error}\label{fig:avg-deletions-read-error}
    \end{subfigure}
    \begin{subfigure}[b]{0.33\textwidth}
        \centering
        \includegraphics[width=\textwidth]{./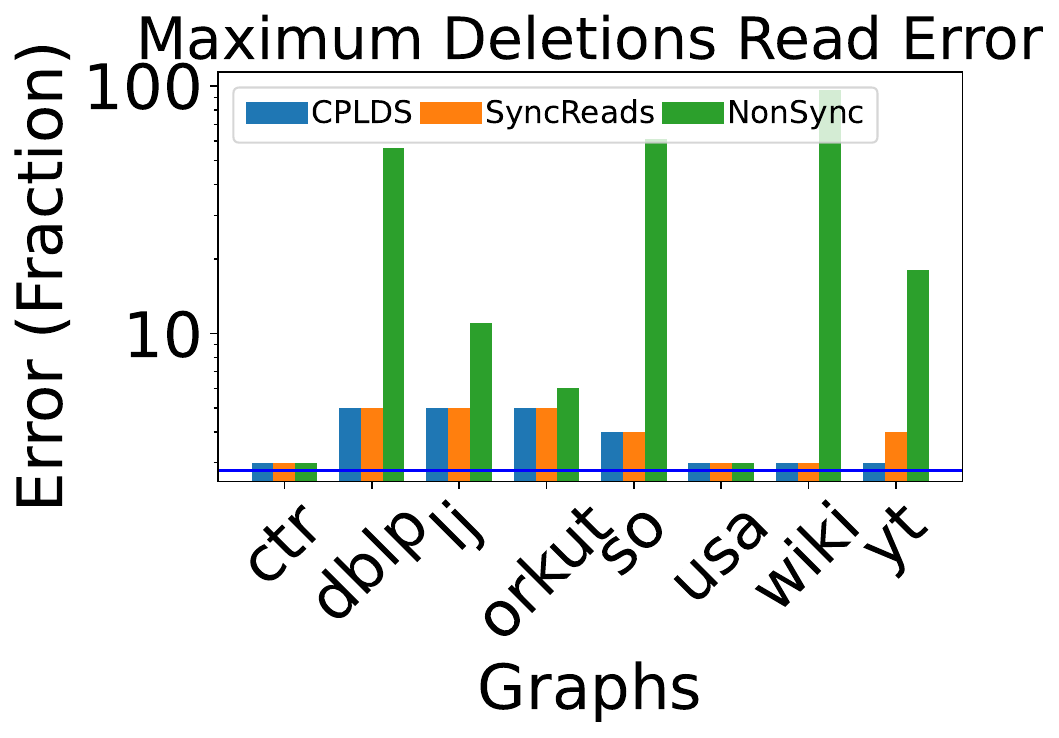}
        \caption{Maximum Deletions Read Error}\label{fig:max-deletions-read-error}
    \end{subfigure}
    \caption{  
        Comparison of the average and maximum errors over all reads and all trials using 
        $15$ update threads and $15$ read threads. The $y$-axis is in log-scale. The blue line shows the theoretical maximum error of 
        $2.8$. %
        The deletion errors sometimes exceed $2.8$ due to the optimizations in our data structure.
    }\label{fig:error}
\end{figure}

\myparagraph{Approximation Factors}
\cref{fig:error} shows the average and maximum approximation factors of our algorithm versus the baselines. We see that 
the maximum approximation factors for \cplds are upper bounded by $2.8$, the theoretical maximum
bound for insertion, and by the maximum approximation factors returned by \syncreads for deletions. The deletion errors for \cplds and \syncreads exceed $2.8$ due to the optimizations in our data structure, as described earlier.
For \cplds, because of our theoretical approximation guarantees, our reads are guaranteed to be linearizable
to either the beginning of the batch or the end of the batch. 
Since it is difficult to know whether the read linearized to the beginning or the end of the batch,
we take the 
minimum of the two errors.

We see that our average error for \cplds is sometimes slightly larger than the average error for \syncreads, by a factor of at 
most $1.15$. Such a small factor is likely due to the variance in our selections of reads.
For \nonlin, we return the minimum approximation factor between the beginning and the end of the batch. We see that the 
maximum errors for \nonlin are up to 52.7x worse than \cplds because the a read can occur while the 
vertex is in the middle of moving levels. Thus, the vertex can be stuck in a ``middle'' level whose
core number is far from the approximate coreness estimate at the beginning or end of the batch.

\myparagraph{Scalability of Read and Write Throughputs}
We test the scalability of our read throughputs as we increase the number of reader threads while maintaining $15$ writer threads. We also test 
our write throughput.
We record the average throughput across all batches and all trials for the \dblp and \lj graphs. For \cplds and \nonlin reads and writes, the average throughput is computed as the total number of reads or writes divided by the 
total write time over \emph{all} batches. For \syncreads reads and writes, the
duration of time in the denominator is the total \emph{read} plus \emph{write} time over all batches, respectively.
For the read scalability of \syncreads, we compute the throughput analytically: we divide the total number of 
reads performed by \cplds by half of the sum of the update time and the minimum read time of
any thread  (on average, a read operation will come in the middle of this interval). The minimum read time of any thread is computed by multiplying the minimum observed latency of reads
(performed by \nonlin) times the total number of reads divided by the number of threads. This analytical
computation upper bounds the read throughput of \syncreads.
For both graphs, we test on the number of reader threads from $\{1, 2, 4, 8, 15\}$. 

In addition to read throughputs, we also test the scalability of our write throughputs as 
we increase the number of writer threads while maintaining $15$ reader threads.
For \dblp, we test on the number of writer threads from $\{1, 2, 4, 8, 15\}$. For \lj, due to the high
running times on smaller number of writer threads, we only test on $\{8, 15\}$.

The results are shown in \cref{fig:scalability}.
We see that \nonlin has the greatest read throughput for most graphs
due to the fact that it does not requiring synchronization mechanisms for individual reads
(i.e., the dependency DAG),
while \cplds has the worst 
read throughputs. 
Because we are upper bounding the read throughput of \syncreads, sometimes 
\syncreads has greater throughput than \nonlin (by a small margin).
\nonlin has slightly higher read throughput by factors of up to $2.21$x than \cplds since reads in \nonlin 
do not have to traverse the dependency DAG. 
On the other hand, either \syncreads or
\nonlin have the greatest writer throughput.
\cplds sometimes has the worst write throughput and is sometimes between \syncreads and \nonlin, specifically, with write throughput
within a factor of $7$ of the maximum throughput of either \syncreads and \nonlin. Such an ordering of the throughputs is expected as \nonlin
has the smallest total time (consisting only of write time) while \syncreads
also has additional time resulting from reads and \cplds requires additional
time to maintain the DAGs.

\begin{figure*}[!t]
    \centering
    \begin{subfigure}[b]{0.33\textwidth}
        \centering
        \includegraphics[width=\textwidth]{./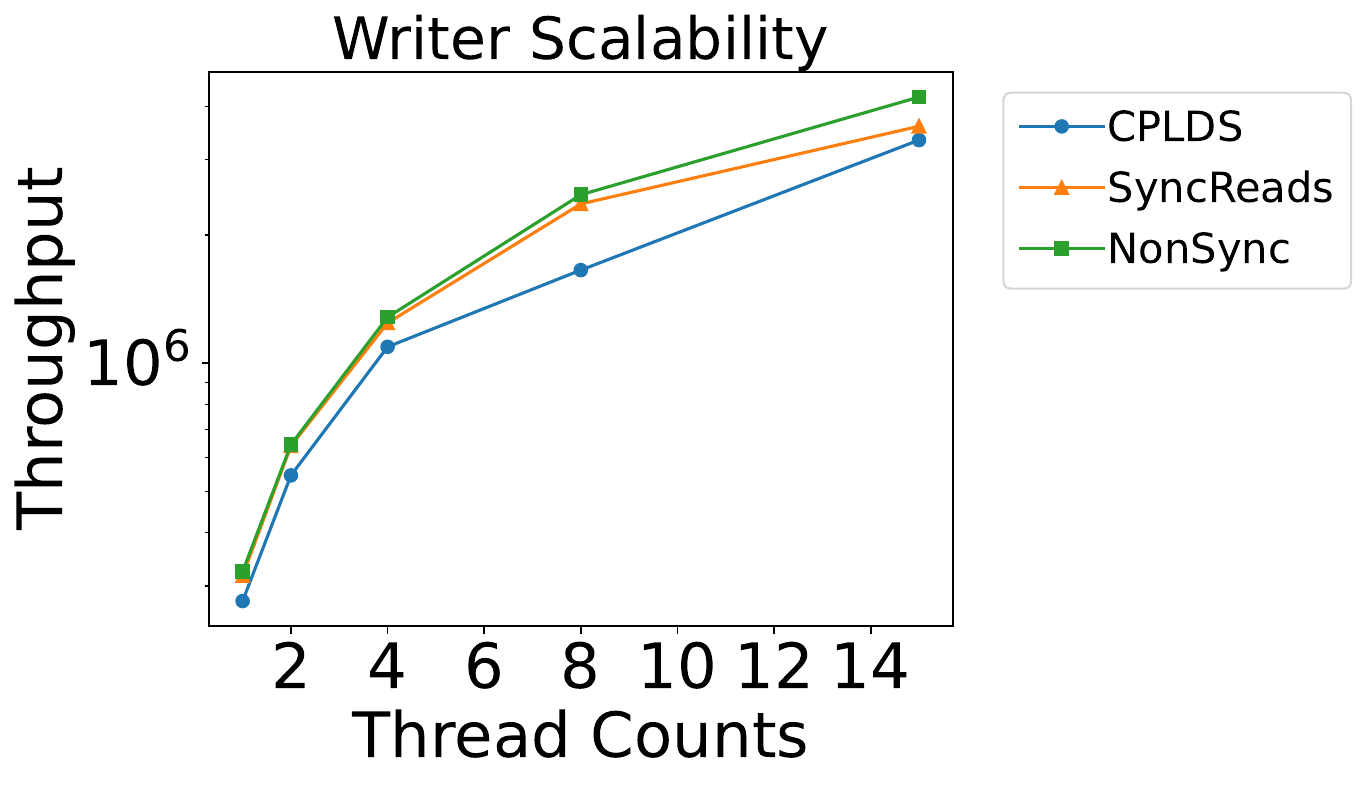}
        \caption{DBLP Writer Throughputs for Insertions}\label{fig:dblp-insertion-write-tp}
    \end{subfigure}
    \begin{subfigure}[b]{0.33\textwidth}
        \centering
        \includegraphics[width=\textwidth]{./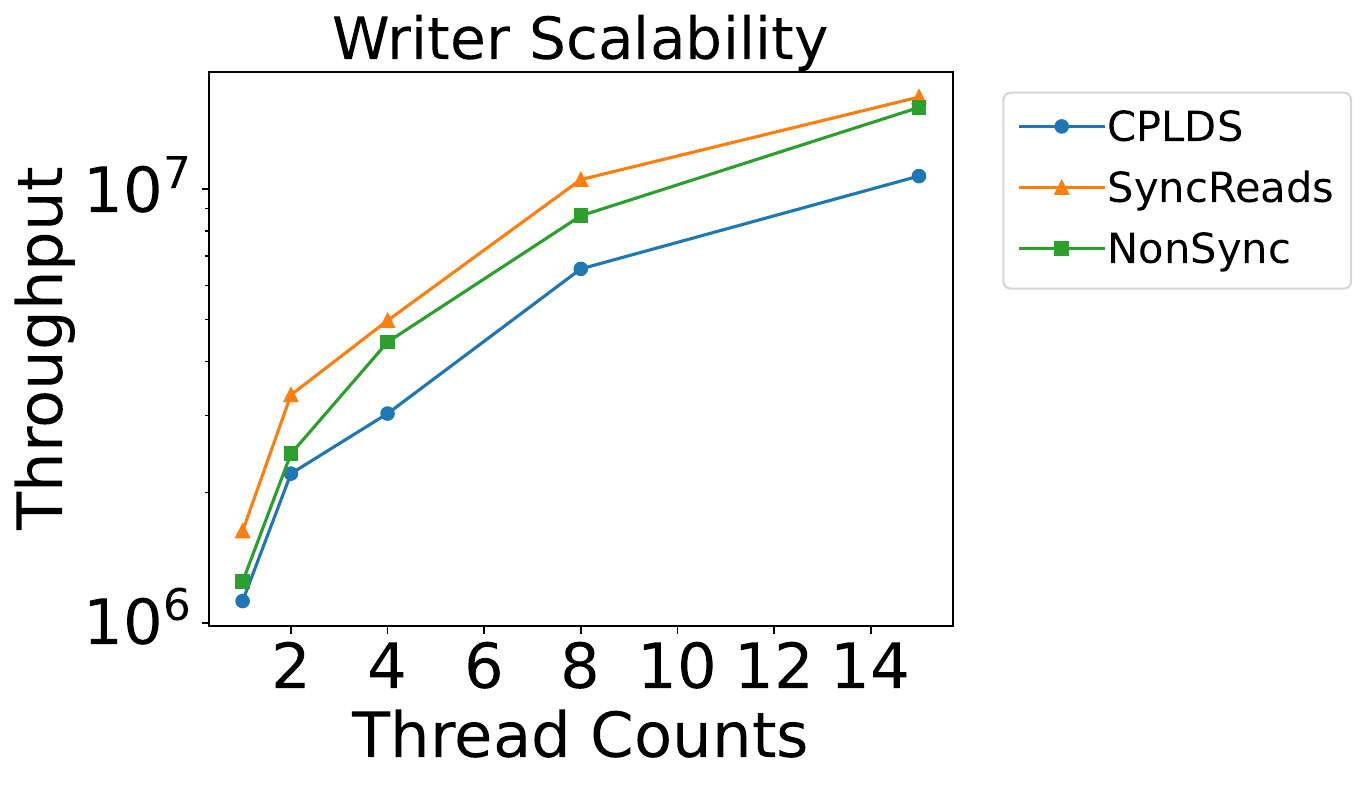}
        \caption{DBLP Writer Throughputs for Deletions}\label{fig:dblp-deletion-write-tp}
    \end{subfigure}
    \begin{subfigure}[b]{0.33\textwidth}
        \centering
        \includegraphics[width=\textwidth]{./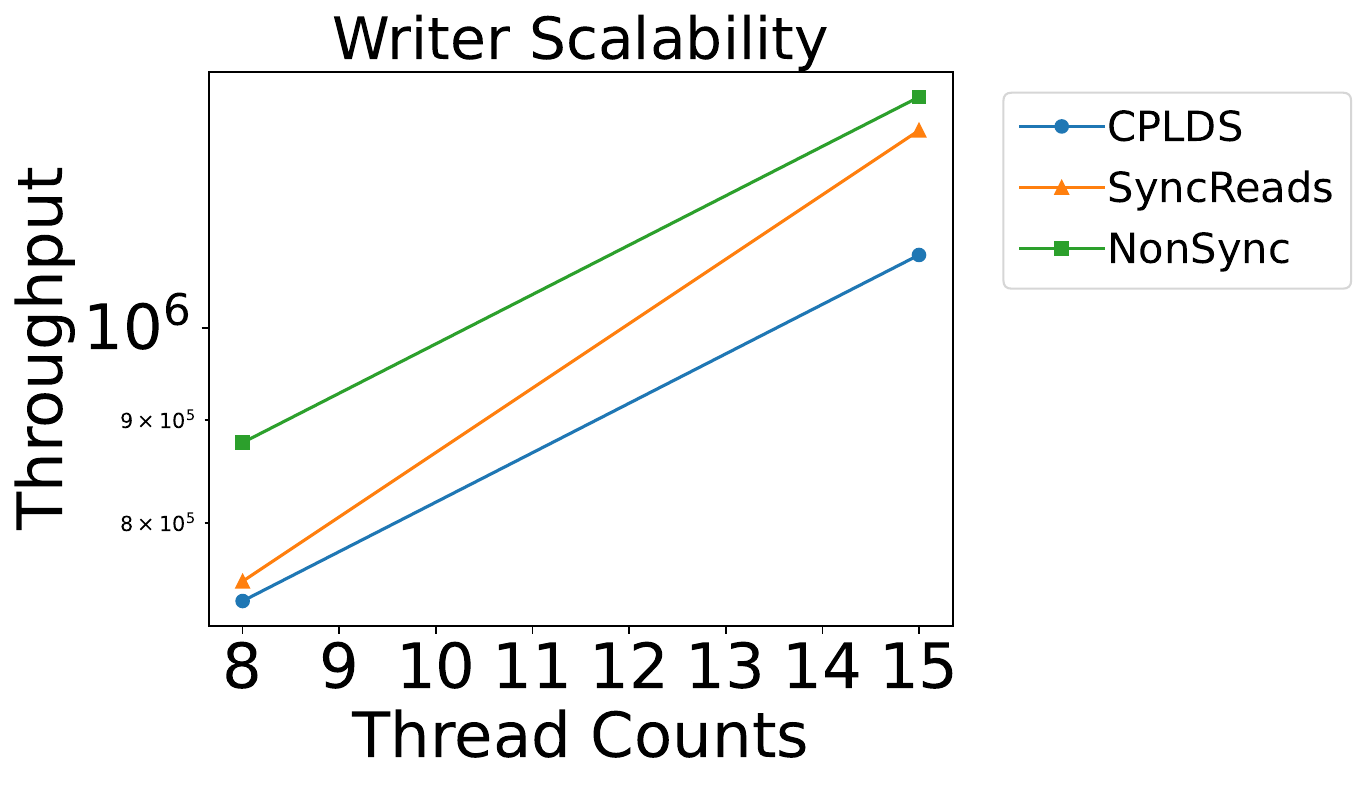}
        \caption{LJ Writer Throughputs for Insertions}\label{fig:lj-insertion-write-tp}
    \end{subfigure}
    \begin{subfigure}[b]{0.33\textwidth}
        \centering
        \includegraphics[width=\textwidth]{./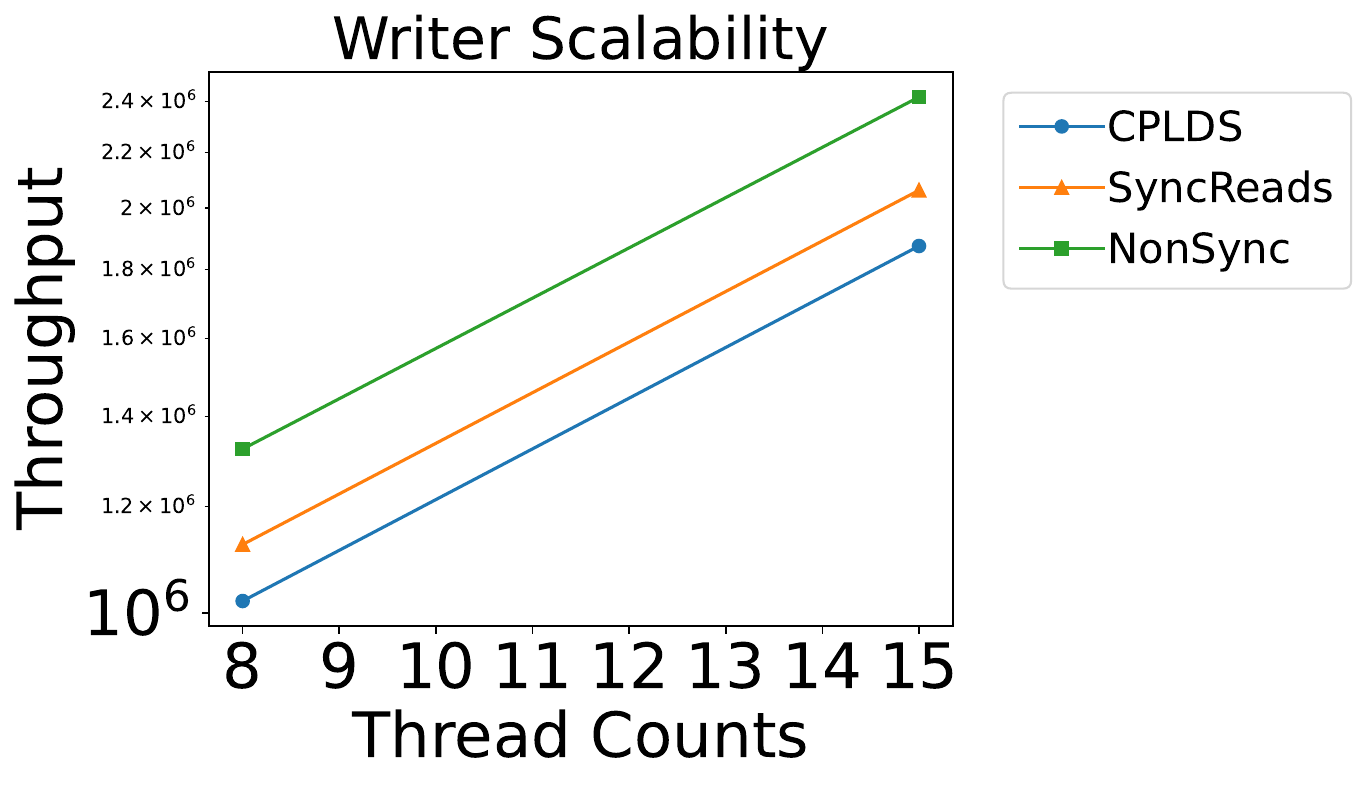}
        \caption{LJ Writer Throughputs for Deletions}\label{fig:lj-deletion-write-tp}
    \end{subfigure}
    \begin{subfigure}[b]{0.33\textwidth}
        \centering
        \includegraphics[width=\textwidth]{./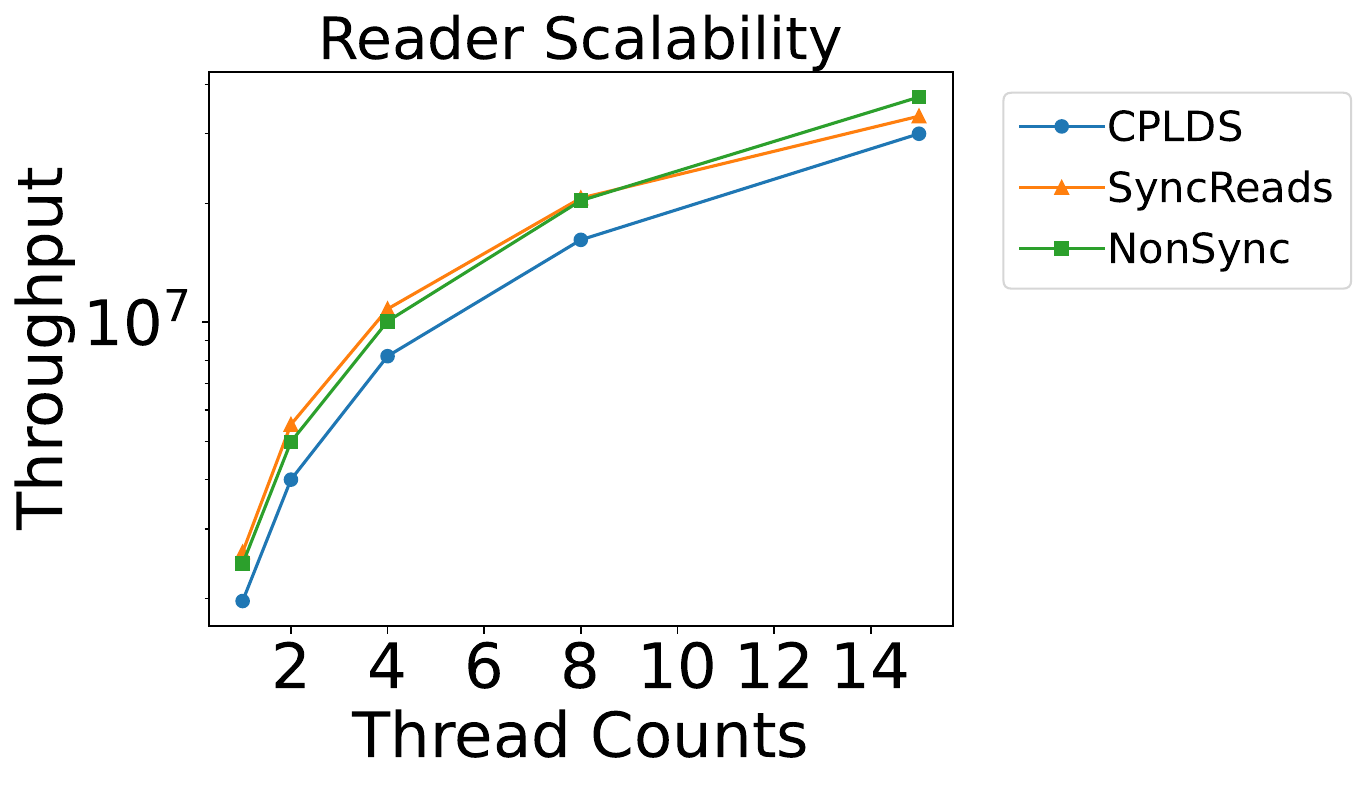}
        \caption{DBLP Reader Throughputs for Insertions}\label{fig:dblp-insertion-read-tp}
    \end{subfigure}
    \begin{subfigure}[b]{0.33\textwidth}
        \centering
        \includegraphics[width=\textwidth]{./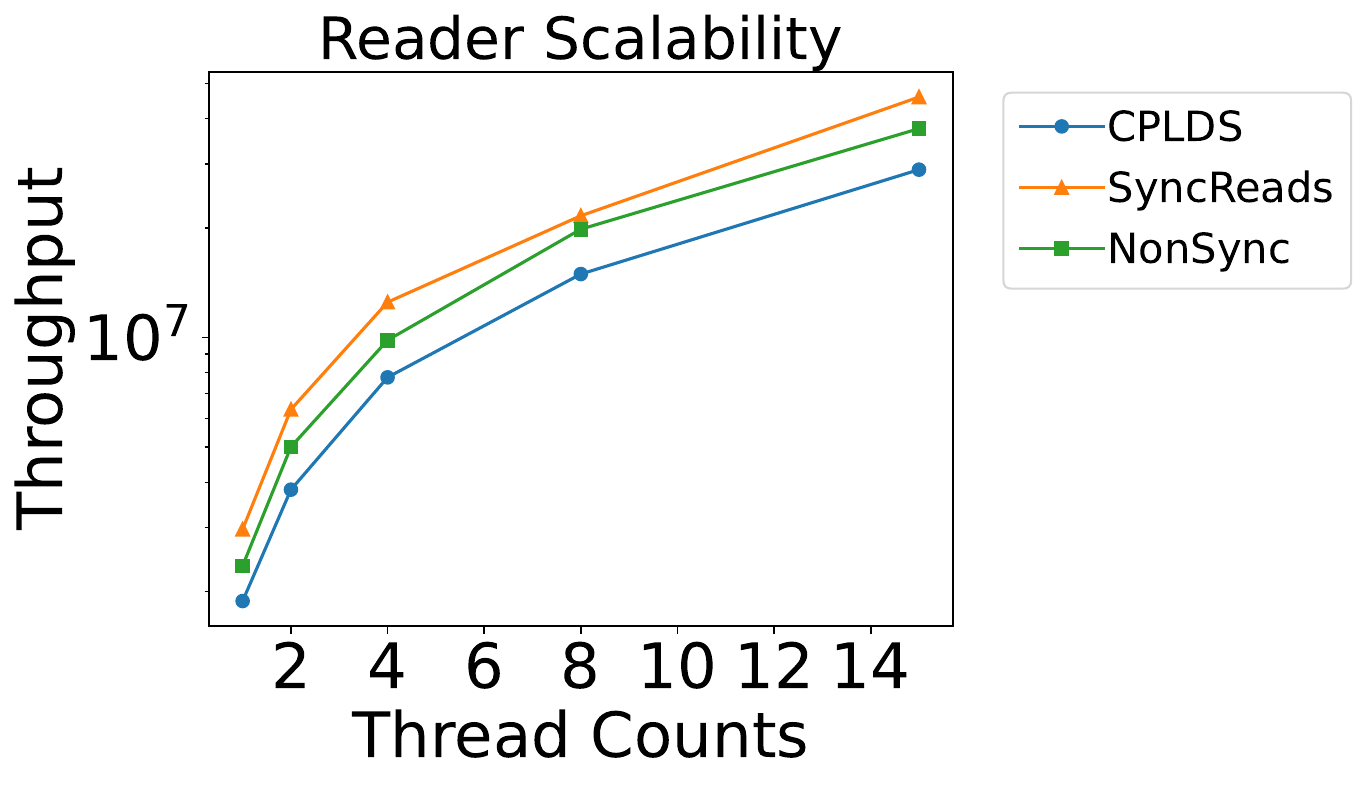}
        \caption{DBLP Reader Throughputs for Deletions}\label{fig:dblp-deletion-read-tp}
    \end{subfigure}
    \begin{subfigure}[b]{0.33\textwidth}
        \centering
        \includegraphics[width=\textwidth]{./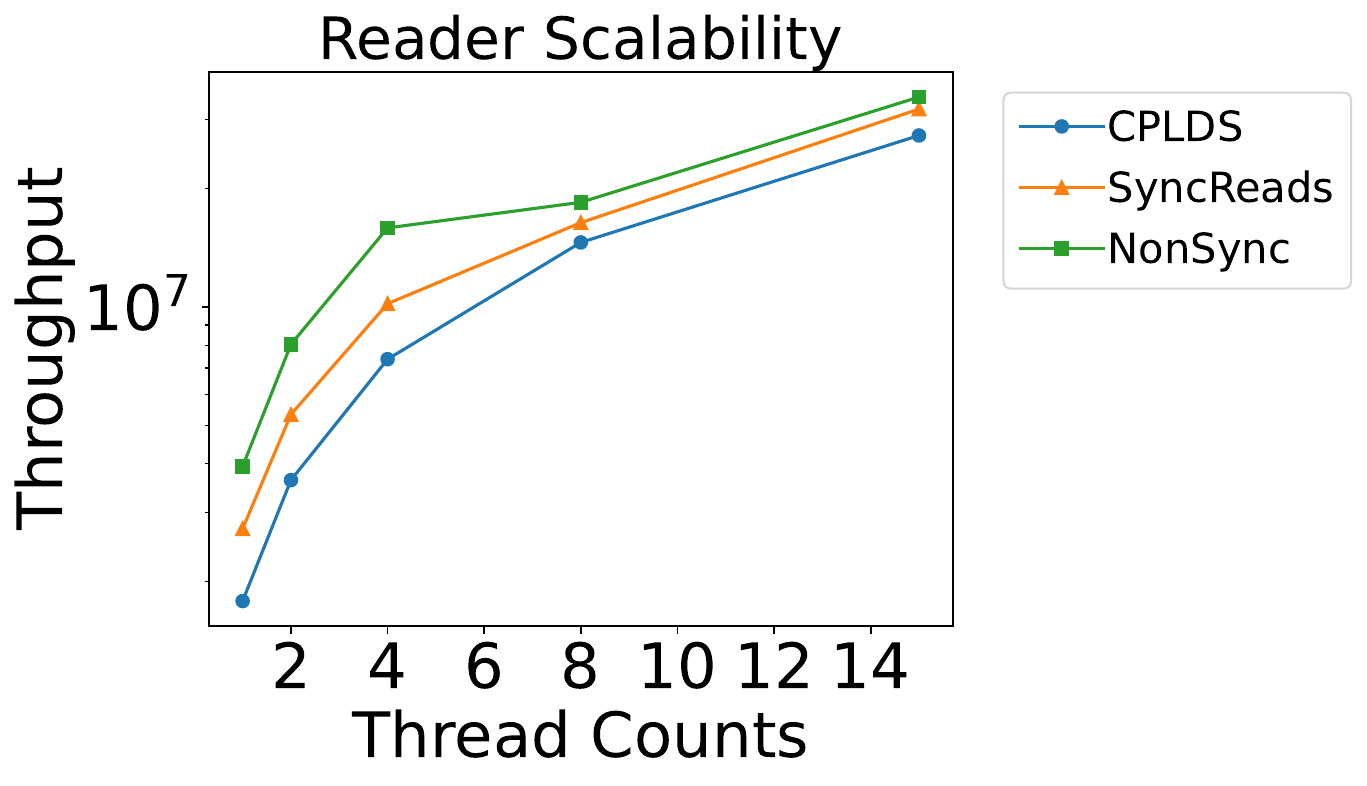}
        \caption{LJ Reader Throughputs for Insertions}\label{fig:lj-insertion-read-tp}
    \end{subfigure}
    \begin{subfigure}[b]{0.33\textwidth}
        \centering
        \includegraphics[width=\textwidth]{./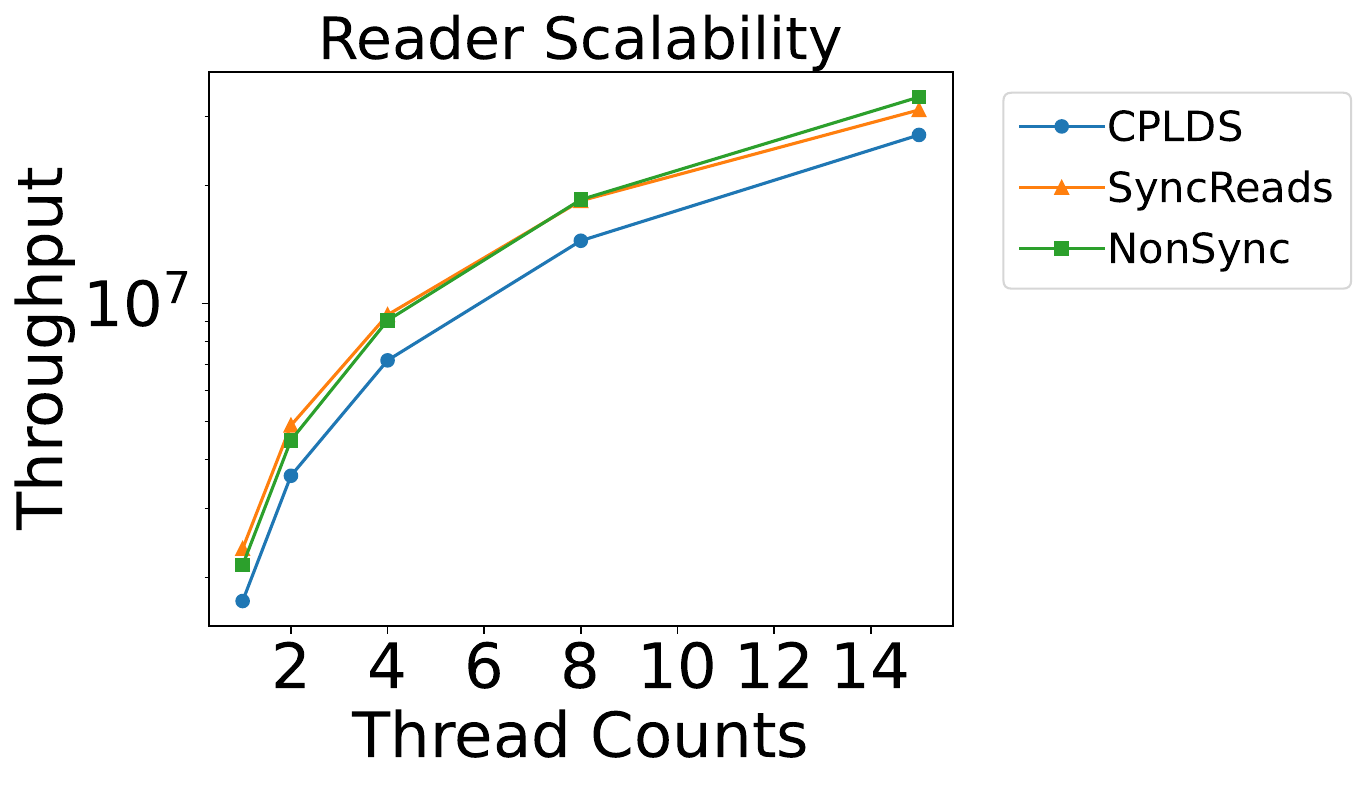}
        \caption{LJ Reader Throughputs for Deletions}\label{fig:lj-deletion-read-tp}
    \end{subfigure}
    \caption{
        Comparison of the average throughput over all batches and trials using 
        different numbers of update threads and reader threads on the \dblp and \lj graphs. The $y$-axis is in log-scale. For the writer throughput experiments, we fix the number of reader threads to $15$, and for the reader throughput experiments, we fix the number of writer threads to $15$.
    }\label{fig:scalability}
\end{figure*}

\section{Related Work}

\myparagraph{Parallel batch-dynamic graph algorithms} There has been work on parallel batch-dynamic \kc decomposition, both in the exact~\cite{DBLP:conf/debs/AridhiBMV16,DBLP:conf/ipps/GabertPC21,DBLP:journals/tpds/HuaSYJYCCC20,DBLP:journals/tpds/JinWYHSX18,DBLP:conf/icdcs/WangYJQXH17} and approximate~\cite{plds} settings. The approximate algorithm of Liu et al.~\cite{plds} has been shown to significantly outperform the exact algorithms. Similar to our paper, these works maintain a \kc decomposition of a graph, or an approximation thereof, under batches of edge updates. Unlike our work, they do not propose a way to query coreness values concurrently with updates.
Parallel batch-dynamic algorithms have been designed for a number of other graph problems~\cite{FerraginaL94,Shen1993,DLSY2021,Pawagi1993,Acar2019,AcarABDW20,Anderson2023,AndersonBBA21,Tseng2022}.

\myparagraph{Concurrency on graphs}
Fedorov et al.~\cite{DBLP:conf/spaa/FedorovKA21} propose a concurrent algorithm for the dynamic connectivity problem, which requires maintaining the connected components of a graph under dynamic edge insertions and deletions. Their algorithm supports single-writer multi-reader concurrency, like our algorithm. If fine-grained locking is applied, their algorithm can handle writers in disjoint components. 
Nathan et al.~\cite{DBLP:conf/cluster/NathanRZY17} propose a non-stop streaming data analysis model, in which graph updates and reads can proceed concurrently. However, the results of their algorithms are not necessarily linearizable.

Dhulipala et al.~\cite{Dhulipala2022,DhulipalaBS19}
design compressed fully-functional trees that support single-writer multi-reader operations on graphs. Unlike our work where the results of reads
can reflect the most recent updates,
their work only supports concurrent reads on static snapshots of graphs.

\myparagraph{Concurrency from parallel batch-dynamic data structures}
Aksenov et al.~\cite{DBLP:conf/opodis/AksenovKS18} propose \textit{parallel combining}, which implements a concurrent data structure from a parallel batch-dynamic one by synchronizing operations into batches executed by a "combiner." Of particular relevance is their read-optimized version, which performs updates sequentially and reads in parallel. They apply their idea to a dynamic connectivity algorithm. 
Agrawal et al.~\cite{Agrawal2014} propose a similar idea, where a scheduler implicitly batches concurrent accesses to a data structure, executing one batch at a time. Like our paper, both works enable concurrency from batch-dynamic data structures but, unlike our paper, they do not allow asynchronous reads concurrent with update batches, and therefore cannot guarantee low latency for reads.

\myparagraph{Concurrency techniques}
Some of our techniques are similar to previous methods in concurrent programming. Operation descriptors, like the ones we use to synchronize reads and updates, are a classic synchronization technique for lock-free algorithms~\cite{DBLP:conf/podc/EllenFRB10,DBLP:phd/ethos/Fraser04,DBLP:conf/spaa/Barnes93}. Our sandwiched reads could be seen as an instance of the clean double collect method used by Afek et al.~\cite{DBLP:journals/jacm/AfekADGMS93} in their atomic snapshot algorithm. Finally, the epsilon trick has been used before to space out linearization points that would otherwise (incorrectly) occur at the same time~\cite{DBLP:conf/spaa/CohenGZ18}.

\section{Conclusion}
We present a novel approximate \kc decomposition algorithm that supports parallel batch-dynamic updates and asynchronous concurrent reads. We ensure linearizability by efficiently tracking causal dependencies between operations using a lightweight dependency DAG design. Our experimental evaluation demonstrates that the high throughput of parallel batch-dynamic updates is preserved, while asynchronous reads attain ultra-low latency and accuracy similar to 
that of the previous synchronous algorithm. 
For future work, we are interested in 
supporting asynchronous updates in our data structure. We are also interested in
applying our data structure to 
other graph problems closely related to \kc decomposition, such as low out-degree orientation, maximal matching, $k$-clique counting,  vertex coloring, and densest subgraph.

\begin{acks}
We thank Rachid Guerraoui, Maurice Herlihy, and Siddhartha Jayanti for helpful discussions.
    A large portion of this work was completed while
    Q.C.\ Liu was a postdoctoral scholar at Northwestern Univeristy and an Apple Research Fellow at the Simons Institute
    at UC Berkeley. Part of this work was completed while I.\ Zablotchi was a postdoctoral fellow at MIT CSAIL, where he was supported by SNSF Early Postdoc.Mobility Fellowship P2ELP2\_195126.
    J.\ Shun was supported 
DOE Early Career Award \#DE-SC0018947,
NSF CAREER Award \#CCF-1845763, Google Faculty Research Award, Google Research Scholar Award, 
cloud computing credits from Google-MIT, and FinTech@CSAIL Initiative.
\end{acks}

\bibliography{references}

\appendix
\section{Artifact Appendix}
\subsection{Setup and Experiment Script}

Our experiments use code from the Graph Based Benchmark Suite (GBBS)
which can be installed from this Github link: \href{https://github.com/qqliu/batch-dynamic-kcore-decomposition}{https://github.com/qqliu/batch-dynamic-kcore-decomposition}. GBBS is most easily installed and run 
on Ubuntu 20.04 LTS, but can be installed easily on any Ubuntu machine.
We have provided an instance with pre-installed software on which you can run experiments if you provide us with a public key. 

First, run \texttt{setup.sh} within the main\\ \texttt{batch-dynamic-kcore-decomposition/} directory by typing \texttt{sh setup.sh} into the command line. The following are the setup instructions that are run by \texttt{setup.sh}:

\begin{enumerate}
    \item If you do not have make, run \texttt{sudo apt install make}.
    \item If you do not have g++, run \texttt{sudo apt-get update}, then
    \texttt{sudo apt-get install g++}.
    \item Run \texttt{git submodule update -{}-init -{}-recursive} to 
    obtain subpackages from inside the GBBS directory.
    \item All scripts for running code is included under the 
    \texttt{/batch-dynamic-kcore-decomposition/gbbs/scripts} directory.
    \item The relevant scripts are: \texttt{cplds\_approx\_kcore\_setup.txt}, \texttt{cplds\_test\_approx\_kcore.py}, and\\
    \texttt{cplds\_read\_approx\_kcore\_results.py}.
\end{enumerate}

\paragraph{Experiment Machine Setup} Our experiments require
machines with 30 cores. Specifically, we tested our experiments
on machines with the following specifications. We use a \texttt{c2-standard-60} Google Cloud
instance (3.1 GHz Intel Xeon Cascade Lake CPUs with a total of 30 cores with two-way hyper-threading, and 236 GiB RAM)
and an \texttt{m1-megamem-96} Google Cloud instance (2.0 GHz Intel Xeon Skylake CPUs with a total of 48
cores with two-way hyper-threading, and 1433.6 GB
RAM). We do not use hyper-threading in our experiments.
Our programs are written in C++, use a work-stealing scheduler~\cite{BlAnDh20}, and
are compiled using \texttt{g++} (version 7.5.0) with the \texttt{-O3}
flag.  We terminate experiments that take over 2 hours to finish.

\paragraph{Experiment Script} We have prepared an experimental 
script for you to run to reproduce the results for \emph{all} 
experiments for insertions on three of our tested graphs. 
We chose these experiments
in order for our suite of experiments to complete within a reasonable
time limit. 
All of our experiments in the script can be completed 
in a total of 15 minutes. 
The experimental script is included in /batch-dynamic-kcore-decomposition/gbbs/scripts/cplds\_experiments and can be 
run by typing \texttt{sh run\_experiments.sh} into the terminal.
The program outputs into the terminal, the results of all experiments
with the corresponding labels.

\subsection{Step-by-Step Instructions}

All of our experiments can be performed using our general purpose
script given in the README file under the \\
\texttt{gbbs/benchmarks/EdgeOrientation/ConcurrentPLDS} directory.

\ifcamera
\else
\appendix
\fi

\end{document}